%% file: main.tex
\documentclass[twoside]{article}

\usepackage[preprint]{aistats2027}
\usepackage[round]{natbib}

\input{preamble.tex}

\begin{document}

\twocolumn[

\aistatstitle{Kernel-based External Validity of Experimental Studies}

\aistatsauthor{Federico Matteucci \And Gerhard Neumann}

\aistatsaddress{Karlsruhe Institute of Technology \And Karlsruhe Institute of Technology}]

\begin{abstract}
The external validity of a study is the extent to which its findings generalize to unseen experimental conditions, such as datasets.
We formalize the results of a study together with its research question, and define external validity as the probability that two independent and same-sized replications of the study produce results whose maximum mean discrepancy (MMD) is below a given tolerance.
This approach, centered on the choice of a space of results and of an appropriate reproducing kernel for the MMD, is flexible enough to be deployed on multiple different types of results and research questions.
Additionally, we establish an asymptotic power-law relation between the size of a study and the quantiles of the MMD, which we use to estimate the desired size of an externally valid study from a small-scale preliminary study.
We deploy our framework to rankings with ties and discuss kernels suited to different research questions.
Experiments on existing case studies and on a simulated newly designed study illustrate the applicability of the framework.
We release our code as part of the \texttt{genexpy} Python module.\textsuperscript{\ref{fn:code}}
\end{abstract}

\section{INTRODUCTION}
\label{sec:introduction}

Assessing the quality of an experimental study is a multi-faceted problem and depends on many independent aspects. 
Among these, its scope should be properly defined~\citep{boulesteix2015hypothesis, bouthillier2021variance}, its documentation should allow for independent reproducibility studies~\citep{baker_is_2016, peng2011reproducible, raff2021reproducibility, raff2023citations}, its results should be sensibly analyzed to assess their significance~\citep{benavoli2017time, corani2017statistical, demsar2006comparison} and should be valid for a population extending beyond the original sample~\citep{liao2021are}.
External validity, which is sometimes referred to as reproducibility, replicability, or generalizability, is this last component.

A study is deemed externally valid if repeating it under comparable experimental conditions, and in the absence of systematic changes to its design, yields similar results~\citep{nas2019reproducibility, findley2021external, pineau2021reproducibility}.
External validity is assessed through the sign- and effect-generalization of a treatment on a numeric response in causal inference~\citep{campbell_factors_1957, egami2023external}, psychology~\citep{opensciencecollaboration2015reproducibility, klein2018manylabs2, patil2016what, verhagen2014bayesian}, economics~\citep{camerer2016replicability}, the social sciences~\citep{delios2022examining, camerer2018evaluating}, medicine~\citep{dersimonian1986meta}, human-AI interaction~\citep{vaccaro2024combinations}, and chain-of-thought reasoning~\citep{sprague2025metanalysis}.
Generalizability theory~\citep{cronbach1963Gtheory, brennan1992generalizability} quantifies the contribution of covariates to the variance of a numeric response.

It is well documented that significant findings often fail to hold beyond the study that produced them.
In machine learning (ML) specifically, the results of experimental studies are well known to vary substantially with design choices~\citep{boulesteix2017evidence, gundersen2023sources, vanmechelen2023benchmark, lu2023coreset, qin2023rdsuite, mcelfresh2022recommender, wainberg2016are}, be it an insufficient number of datasets~\citep{dehghani2021benchmark, alvarez2022revealing, boulesteix2015hypothesis} or differences in hyperparameter tuning~\citep{bouthillier2021variance}, initialization seed~\citep{gundersen2023reporting}, hardware~\citep{zhuang2022neural}, minor perturbations in the data~\citep{alzahrani2024benchmarks}, or the inclusion of additional competitors~\citep{zhang2024inherent}.
We refer to \citet{liao2021are} for a broader discussion.
This motivates the search for alternative methods to assess the external validity of findings that are not restricted to numeric responses and apply to other types of data, such as rankings.

A natural tool for this purpose is the maximum mean discrepancy (MMD), which compares distributions by mapping them into an appropriate reproducing kernel Hilbert space (RKHS).
Successful applications include testing \citep{gretton2012kernel,fernandez2019maximum,liu2020learning,wynne2022kernel,chevyrev2022signature,schrab2023mmd,zhou2025dual, chatalic2026nystroemmmd}, generative modeling \citep{dziugaite2015training,li2015generative,sutherland2017generative,li2017mmd,zhou2025inductive}, quantization \citep{arbel2019maximum,gladin2024interaction,belhadji2026weighted}, subspace selection \citep{cribeiroramallo2025adversarial}, robust regression \citep{alquier2024universal}, learning world models \citep{balestriero2025lejepa,zimmermann2025kerjepa}, model selection \citep{bruck2025distribution}, and online change detection \citep{li2015mstatistic,li2019scan,kalinke2025maximum,kalinke2025optimal}.
At the core of the MMD's flexibility are reproducing kernels, which have been defined on a variety of domains, including strings and graphs \citep{borgwardt2006integrating,obray2022evaluation}, Riemannian manifolds \citep{cheng2024kernel}, Hilbert spaces \citep{wynne2022kernel}, time series and stochastic processes \citep{chwialkowski2014wild,laumann2021kernel,chevyrev2022signature}, persistence diagrams \citep{kwitt2015statistical}, and (partial) rankings \citep{mania2018kernelrankings, lomeli2019antithetic}.
Rankings are particularly prominent in the analysis of experimental results, especially in non-parametric tests and procedures~\citep{demsar2006comparison, benavoli2017time}.
This work aims at applying the MMD to the issue of lack of external validity. 

Our core contribution is a novel take on the problem of assessing the external validity of an experimental study. 
To this end, we propose a kernel-based framework for the formalization and analysis of experimental results, with only mild structural assumptions, and translate an existing natural-language definition of external validity into our statistic, which we call kernel external validity (KEV). 
We investigate the properties of our statistic and discuss how to set its parameters (the kernel $k$ and a numeric threshold $\eps$) in an interpretable and informed way.
We discuss how the KEV can help assess the appropriate size of an experimental study, and apply it to a simulated study to show how it can be useful in practice. 
We show applications of the KEV to rankings and two case studies.  
Finally, we release our code as part of the \texttt{genexpy}\footnote{\label{fn:code} \mbox{\url{https://github.com/DrCohomology/genexpy}}.} Python module, which contains routines to handle rankings, implementations of kernels for rankings, and the MMD.

\textbf{Outline}\; \cref{sec:background} introduces our notation, formalizes experimental studies, and recalls the MMD.
\cref{sec:externalvalidity} discusses our KEV statistic and how to choose its parameters. 
\cref{sec:experiments} contains our experiments and case studies.
\cref{sec:conclusions} details limitations.

\section{BACKGROUND}
\label{sec:background}

This section introduces our notation, formalizes experimental studies, and recalls the MMD.
We collect other section-specific notation where necessary.

\subsection{Notation}

Our notation is as follows.

\textbf{General}\;
We denote with $\N$ the set of positive integers and with $\R^+$ the real interval $(0, +\infty)$.
We denote with $[n]$ the set $\{1, \dots, n\} \subset \N$.
For sets $A, B$, $A \sqcup B$ indicates their disjoint union.
For a matrix $M$, let $M^\transpose$ be its transpose.
Given a set $\m X$, $\b v \in \m X^n$, and $\b v' \in \m X^m$, $\b v|| \b v' \in \m X^{n+m}$ is their concatenation.
Given sets $\m X$ and $\m Y$, $\m Y^{\m X}$ is the set of functions $\m X \to \m Y$.
For a function $f \in \m Y^{\m X}$, $f(x)\equiv y$ is equivalent to $f(x) = y$ for all $x\in \m X$.
A ranking (with ties) $\ra$ of $\nalt$ alternatives is a total and transitive binary endorelation on $[\nalt]$.
For $\alt \in [\nalt]$, $\ra(\alt)$ is the rank of $\alt$ in $\ra$.
The set of rankings of $\nalt$ alternatives is denoted with $\Ra$. 

\textbf{Topology and Probability}\;
Let $(\m X, \tau_{\m X})$ be a separable metrizable topological space. 
We write $\m M_1^+(\m X)$ for the set of probability measures on $\m X$, understood w.r.t.\ the measurable space $(\m X, \m B_{\m X})$, where $\m B_{\m X}$ denotes the Borel $\sigma$-algebra induced by $(\m X, \tau_{\m X})$. 
For $x\in\m X$, the Dirac measure is written as $\delta_x$, that is, $\delta_x(A) = 1$ if $x\in A$ and $0$ otherwise ($A\in \m B_{\m X}$).
We use $P \otimes P$ and $P^2$ interchangeably, depending on the number of random variables; for instance, we write $X, X' \sim P \otimes P$ and $W \sim P^2$.
For $f \in \m Y^{\m X}$ measurable, $\E_{X\sim P} f(X) = \int_{\m X} f(x) \d P(x)$; if there is no ambiguity, we shorten it to $\E_P f(X)$.
We indicate with $F_X$ the cumulative distribution function (cdf) of a real-valued random variable $X$.
For $n \in \N$ and $\b x \in \m X^n$, we denote with $\hat P_{\b x} \coloneqq n^{-1} \sum_{i = 1}^n \delta_{x_i}$; moreover, we define the function $\hat P: \b x \mapsto \hat P_{\b x}$.
Given a measurable function $X: (\Omega, \m B_\Omega, \P) \to (\m X, \m B_\m X)$, the pushforward measure of $\P$ by $X$ is $X_*\P: A\mapsto \P(X^{-1}(A))$, for $A \in \m B_\m X$.

\textbf{(Reproducing Kernel) Hilbert Spaces}\;
Let $k: \m X \times \m X \to \R$ be a measurable kernel with separable RKHS $(\m H_k, \lrang{\cdot, \cdot}_{\m H_k})$.
In other words, $\m H_{k}$ is a space of functions $\m X \to \R$ for which $k$ has the reproducing property: $\forall h \in \m H_k$, $h(x) = \lrang{h, k(\cdot, x)}_{\m H_k}$.
The canonical feature map associated to $k$ is, for $x\in \m X$, $k(\cdot, x) \in \m H_k$.
The mean embedding of a distribution $P \in \m M_1^+(\m X)$ relative to $k$ is $\mu_P \coloneqq \E_P k(\cdot, X) \in \m H_k$, provided that $\E_{P}\lrn{k(\cdot,X)}_{\m H_k} < \infty$; the latter is guaranteed if $\sup_{x\in\m X}\sqrt{k(x,x)} < +\infty$, in which case $k$ is called bounded.
A kernel $k$ has constant diagonal if $k(x, x) \equiv c \in \R$ and is normalized if $c=1$.

\subsection{Experimental Studies}
\label{sec:background:studies}

\begin{figure}[thb]
    \centering
    \includegraphics[width=\linewidth]{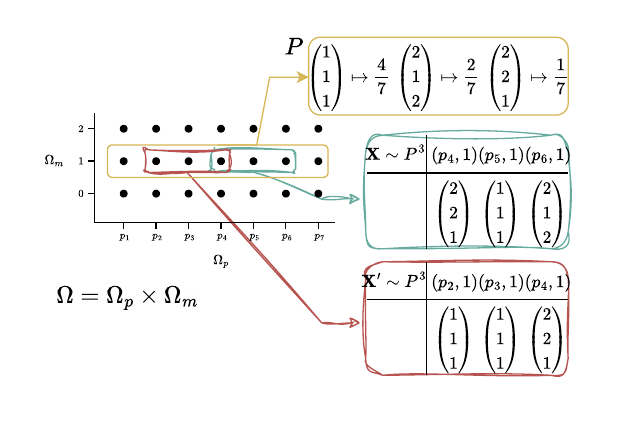}
    \caption{\cref{ex:checkmate_study} as an experimental study. We are interested in the external validity of the results $P$ for a fixed configuration $m=1$.}
    \label{fig:checkmate_study}
\end{figure} 

An (experimental) study is a collection of experiments evaluating the same alternatives (such as the LLMs in \cref{ex:checkmate_study}) under different experimental conditions.
We assume a factorial design, i.e., that every experiment is uniquely identified by an experimental condition and that the levels of the random factors can be chosen independently. 

\textbf{Mathematical Formalization}\;
For a countable set of indices $I$, a factor is $i \in I$ and $\Omega_i$ is the set of levels that factor $i$ can take.
For any $i$, we endow $\Omega_i$ with a topology $\tau_{\Omega_i}$, the Borel $\sigma$-algebra $\m B_{\Omega_i}$, and a probability measure $\P_i \in \m M_1^+(\Omega_i)$.
We define the probability space of experimental conditions as $\Omega \coloneqq \prod_{i \in I} \Omega_i$, with the product $\sigma$-algebra $\m B$ and probability $\P = \bigotimes_{i\in I}\P_i$.
We partition the factors into fixed and random ones: the generalization claim concerns different levels of the random factors.
Formally, letting $I_\fixed$ and $I_\random$ be the sets of fixed and random factors respectively, we have $I = I_\fixed \sqcup I_\random$ and $\Omega = \prod_{I_\fixed} \Omega_i \times \prod_{I_\random} \Omega_i \eqqcolon \Omega_\fixed \times \Omega_\random$. 
We define a configuration as $\omega_\fixed \in \Omega_\fixed$.
The (experimental) result is a random variable $X: \Omega \to \m X$, whose distribution we indicate with $P \coloneqq X_*\P \in \m M_1^+(\m X)$.

The following remark elaborates on some considerations of practical interest.
\begin{remark}\label{remark:practicalconsiderations}~
    \begin{enumerate}[label=(\alph*)]
        \item Running experiments means sampling an experimental condition $\omega$ from $\P$ and then computing its result $X(\omega)$.
        \item For an experimenter, choosing appropriate $\P_i$-s is part of the experimental setup, as certain levels might be more likely to occur.
        If $\Omega_i$ is finite or a bounded real interval and if no prior knowledge is available, $\P_i$ can be taken to be a uniform distribution. 
        If $\Omega_i$ is unbounded, the experimenter might want to restrict it to a bounded set; the claim of generalization concerns this bounded set. 
        \item Throughout this work, we evaluate the external validity of a study for a fixed configuration and randomly sampled levels of the random factors. 
        \item \cref{app:studies} gives a finer-grained classification of factors following~\citet[Chapter~1]{montgomery2017design}.
    \end{enumerate}
\end{remark}

\begin{example}(The ``checkmate-in-one'' task, \cref{fig:checkmate_study})
\label{ex:checkmate_study}~
    We compare three large language models (LLMs), the alternatives, on the ``checkmate-in-one'' task~\citep{srivastava2023beyond}, where an LLM succeeds if it outputs the unique checkmating move from a position of pieces on a chessboard.
    The factors involved are the number of shots $m$ (fixed factor) and the pieces' initial position $p$ (random factor), which is sampled uniformly at random from a finite pool.
    The research question is the following: \textit{``Is \textnormal{$\texttt{LLM}_1$} consistently ranked the same under different positions?''}
    The results are rankings, $\m X=\m R_3$, where an LLM achieves rank $1$ if it successfully solves the problem and $2$ otherwise.
\end{example}

\subsection{Maximum Mean Discrepancy}
\label{sec:background:mmd}

For a bounded and measurable kernel $k$, for $P,Q\in\m M_1^+(\m X)$, we define $\MMD(P,Q) \coloneq \lrn{\mu_P-\mu_Q}_{\m H_k}$.
Given two random samples $\b X = (X_1, \dots, X_n)^\transpose \sim P^n$ and $\b X' = (X'_1, \dots, X'_m)^\transpose \sim P^m$, the MMD between their respective empirical distributions takes the form
\begin{align}
\label{eq:mmd_empirical}
    &\MMD(\hat P_\b X, \hat P_{\b X'})^2 
        = \frac{1}{n^2} \sum\limits_{i, j=1}^n k(X_i, X_j) \\
        &\quad+ \frac{1}{m^2} \sum\limits_{i, j=1}^m k(X'_i, X'_j)
        - \frac{2}{nm} \sum\limits_{i, j=1}^{n,m} k(X_i, X'_j),
\end{align}
which is the same form as the V-statistic estimator for $\MMD(P, P) = 0$.
We will leverage this structural equivalence in the proof of \autoref{thm:mmd_limit}.

We define the function $\MMD_n$ as the composition 
\begin{equation}
\label{eq:mmdn_def}
\begin{tikzcd}[arrow style=tikz, >={Straight Barb[scale=0.8]}, column sep=2.5em]
\Omega^{n+n}
  \arrow[r, "X^n \times X^n"]
  \arrow[rrr, bend right=15, "\MMD_n"']
& \m X^{n+n} \arrow[r, "\hat P \times \hat P"]
& \m M_1^+(\m X)^2 \arrow[r, "\MMD"]
& \R
\end{tikzcd}.
\end{equation}
Under \cref{ass:well-defined-KEV} (i) and (ii), $\MMD_n$ is a real-valued random variable.

\section{KERNEL EXTERNAL VALIDITY}
\label{sec:externalvalidity}

In this section, we detail how to translate a standard definition of external validity into a kernelized mathematical framework and analyze its properties. 
We begin by listing our assumptions.
\begin{assumption}\label{ass:well-defined-KEV}
    (i) $(\m X,\tau_{\m X})$ is a separable metrizable topological space.
    (ii) $k: \m X \times \m X \to \R$ is a bounded and continuous kernel.
    (iii) The experiments follow a factorial design; $(\Omega, \tau_\Omega)$ is Polish, and the levels of the random factors are sampled independently of the observed results, as described in \cref{sec:background:studies}.
    (iv) We compare finitely many alternatives under countably many experimental factors.
    (v) Every alternative is evaluated under every experimental condition.
\end{assumption}

\begin{remark}\label{remark:well-defined-KEV}~
    \begin{enumerate}[label=(\alph*)]
        \item (i) and (ii) allow the definition of the $\MMD$ and make $\MMD_n$ measurable.
        \item (ii) implies that $H_k$ is separable.
        \item (iii) and (iv) allow i.i.d. samples from $\P$ and $P$ as well as the incremental sampling of \cref{alg:studyexecution}.
        \item (v) simplifies the analysis.\footnote{In practice and in \cref{sec:experiments}, we impute the missing evaluations.}
    \end{enumerate}
\end{remark}

After discussing our initial assumptions, we are ready to introduce our KEV statistic.
Recall that a study is externally valid if repeating it under comparable experimental conditions yields similar results~\citep{nas2019reproducibility, pineau2021reproducibility}. 
Our statistic translates this definition into a kernel framework: comparable experimental conditions are independent draws of the levels of the random factors, and similar results are results whose MMD is at most $\eps$.
\begin{definition}[KEV statistic]
\label{def:KEVstatistic}
    Under \cref{ass:well-defined-KEV}, with $\MMD_n$ as in \eqref{eq:mmdn_def}, and for $\eps \in \R^+$, the KEV of a study with results $P \in \m M_1^+(\m X)$ is 
    \begin{align}
    \label{eq:KEV_def}
        \ev_n\lrp{P, \eps}
            &\coloneqq F_{\MMD_n}(\eps)\notag\\
            &\,= P^n \otimes P^n \lrp{\MMD\lrp{\hat P_{\b X}, \hat P_{\b X'}} \leq \eps}.
    \end{align}
    We say that a study with results $P$ is $(\alpha, \eps, n)$-(externally) valid if $\ev_n\lrp{P, \eps} \geq \alpha$.
\end{definition}
As discussed in \cref{remark:practicalconsiderations} (c), we are usually interested in the external validity of a study for a fixed configuration.

We remark that the KEV is not pointwise monotonic in $n$ due to steps in the cdf of the MMD for discrete spaces, a problem especially noticeable when $n$ is small.
However, the KEV is monotonic in the sense described by the following lemma, proved in \cref{app:proofs:KEVmonotonicity}.
\begin{lemma}[Monotonicity of the KEV]
\label{lemma:KEVmonotonicity}
    Under \cref{ass:well-defined-KEV}, we have
    \begin{align}
        \int_\R \lrp{\ev_{n+1}(P, y) - \ev_{n}(P, y)} \,\d y \geq 0
    \end{align}
\end{lemma}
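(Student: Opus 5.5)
The plan is to turn the integrated difference of cdfs into a difference of expectations, and then show that $\E\,\MMD_{n}$ is nonincreasing in $n$ by a leave-one-out averaging argument.

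\textbf{Step 1 (reduction to expectations).} Under \cref{ass:well-defined-KEV} (ii), $\kappa \coloneqq \sup_x k(x,x) < \infty$ and $\lrn{k(\cdot,x)}_{\m H_k} = \sqrt{k(x,x)} \leq \sqrt{\kappa}$. Hence any MMD between probability measures lies in $[0, 2\sqrt{\kappa}]$. Consequently $\ev_n(P,y) = \ev_{n+1}(P,y) = 0$ for $y<0$, and both equal $1$ for $y \geq 2\sqrt\kappa$, so the integrand vanishes outside a compact interval and the integral is finite. For a nonnegative bounded random variable $Z$ we have $\E Z = \int_0^\infty (1 - F_Z(y))\,\d y$. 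Therefore
\begin{align*}
\int_\R \lrp{\ev_{n+1}(P,y) - \ev_n(P,y)}\,\d y = \E\,\MMD_n - \E\,\MMD_{n+1},
\end{align*}
where $\MMD_n$ is the random variable in \eqref{eq:mmdn_def}; measurability holds by \cref{remark:well-defined-KEV} (a). It thus suffices to prove $\E\,\MMD_{n+1} \leq \E\,\MMD_n$.

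\textbf{Step 2 (leave-one-out decomposition).} Let $\b X, \b X' \sim P^{n+1}$ be independent. For $j\in[n+1]$, let $\b X_{-j}, \b X'_{-j} \in \m X^n$ denote the samples with the $j$-th entry removed. Each point appears in exactly $n$ of the $n+1$ leave-one-out samples, so
\begin{align*}
\mu_{\hat P_{\b X}} = \frac{1}{n+1}\sum_{j=1}^{n+1} \mu_{\hat P_{\b X_{-j}}},
\end{align*}
and the same identity holds for $\b X'$. Subtracting the two identities and applying the triangle inequality in $\m H_k$ gives
\begin{align*}
\MMD\lrp{\hat P_{\b X}, \hat P_{\b X'}} \leq \frac{1}{n+1}\sum_{j=1}^{n+1} \MMD\lrp{\hat P_{\b X_{-j}}, \hat P_{\b X'_{-j}}}.
\end{align*}

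\textbf{Step 3 (taking expectations).} By independence (\cref{ass:well-defined-KEV} (iii), \cref{remark:well-defined-KEV} (c)), $(\b X_{-j}, \b X'_{-j}) \sim P^n\otimes P^n$ for every $j$. Each summand therefore has the law of $\MMD_n$, and taking expectations yields $\E\,\MMD_{n+1} \leq \E\,\MMD_n$.

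The main subtlety is Step 1: it requires checking that the integral is well defined and that the tail formula applies, which rests on boundedness of $k$. This also explains why only this integrated form of monotonicity holds. The triangle-inequality argument controls the mean of the MMD, not its cdf pointwise, and that is exactly what the statement asks for.
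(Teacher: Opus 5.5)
Your proposal is correct and follows essentially the same route as the paper. Both use the layer-cake identity (valid because $k$ is bounded, so $\MMD_m \in [0, 2\sup_x\sqrt{k(x,x)}]$) to reduce the claim to $\E\,\MMD_{n+1}\le\E\,\MMD_n$. Both then write the $(n+1)$-sample embedding difference as the average of the $n+1$ leave-one-out differences, apply the triangle inequality, and take expectations; subtracting your two leave-one-out identities gives exactly the paper's $U_j$.
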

We now proceed to discuss the parameters of \cref{def:KEVstatistic}, starting with the kernel $k$.

\textbf{Kernels Formalize Research Questions}\; 
The external validity of a study ultimately depends on its research question, that is, what features of the results are relevant. 
Recalling \cref{ex:checkmate_study}, two research questions might be \emph{``Is $\texttt{LLM}_1$ consistently in the same position of the ranking?''} and \emph{``Are the best alternatives consistently the same ones?''}, which consider as the only relevant features the rank of $\texttt{LLM}_1$ and the top tier, respectively.
Kernels offer a clean formalization of this concept: the feature map $k(\cdot, x)$ keeps only the relevant features, as shown in the following example.
\begin{example}
    Assume $\m X = \lrc{x, x'}$ such that $k(\cdot, x) = k(\cdot, x')$.
    For any $P \in \m M_1^+(\m X)$, $\mu_P = P(\{x\})\,k(\cdot,x) + P(\{x'\})\,k(\cdot,x') = k(\cdot,x)$. 
    Hence, for any $Q \in \m M_1^+(\m X)$, we have $\mu_P - \mu_Q = 0_{\m H_k}$ and thus $\MMD(P, Q) \equiv 0$.
    In particular, this holds for $\hat P_{\b X}, \hat P_{\b X'}$ in \eqref{eq:KEV_def}: $\ev_n(P, \eps) = 1$ for any choice of $P$ and $\eps$.
\end{example}
\begin{remark}
    We do not require kernels to be characteristic; indeed, controlling what features are discarded by the kernel is a core component of our framework.
\end{remark}

The next section introduces kernels for rankings and their research questions.

\subsection{Kernels for Rankings}
\label{sec:externalvalidity:kernels:kernelsforrankings}

The \emph{Borda kernel} $\kernel[b]$ is suitable for research questions of the form \emph{``Is alternative $a^*$ consistently ranked the same?''}.
For $a^* \in [\nalt]$ and $\nu \in \R^+$,
\begin{equation}
\label{eq:borda_kernel}
    \kernel[b][\alt^*, \nu]\lrp{\ra_1, \ra_2} =  e^{-\nu \lra{b_1 - b_2}} \quad \in \lrb{e^{-\nu (\nalt-1)}, 1},
\end{equation}
where $b_l = \lra{\lrc{\alt\in[\nalt]: \ra_l(\alt)
\geq \ra_l(\alt^*)}}$ is the Borda count~\citep{borda1781memoire}. 

The \emph{Jaccard kernel} $\kernel[j]$ is suitable for research questions of the form \emph{``Are the top-$t$ alternatives consistently the same ones?''}.
We use the Jaccard similarity $J_t$, which is known to be a kernel; see e.g. \citet{bouchard2013jaccard}.
For a tier $t \in \N$, 
\begin{equation}
\label{eq:jaccard_kernel}
\kernel[j][t]\lrp{\ra_1, \ra_2} \coloneqq J_t(\ra_1, \ra_2) = \frac {\lra{\ra_{1, [t]} \cap \ra_{2, [t]}}} {\lra{\ra_{1, [t]} \cup \ra_{2, [t]}}} \quad\in [0, 1],
\end{equation}
where (for $l = 1, 2$) $\ra_{l, [t]} \coloneqq \lrc{\alt\in[\nalt]: \ra_l(\alt) \leq t}$ is the set of alternatives with rank at most $t$.

Finally, the \textit{Mallows kernel} $\kernel[m]$ is suitable for research questions of the form \emph{``Are the alternatives ranked consistently?''}
We adapt the existing kernel for untied and partial rankings~\citep{jiao2015kendall, mania2018kernelrankings, mallows1957model, lomeli2019antithetic} to rankings with ties.
For $\nu \in \R^+$,
\begin{equation}
\label{eq:mallows_kernel}
\kernel[m][\nu](\ra_1, \ra_2) = e^{-\nu n_d} \quad \in \lrb{e^{- \nu \binom \nalt2}, 1}, 
\end{equation}
where $n_d \coloneqq \tfrac14 \sum_{\alt, \alt' \in [\nalt]} \lra{s_1(\alt, \alt') - s_2(\alt, \alt')}$ is the number of discordant pairs and
$s_l(\alt, \alt') \coloneqq \sign\lrp{\ra_l(\alt)-\ra_l(\alt')}$ with $\sign(0) = 0$.

The following assumption is central to the derivations in the rest of this section.
\begin{assumption}[Interpretability]
    \label{ass:interpretability}
    $k$ factorizes as $k = g \circ d$, where $d: \m X \times \m X \to [0, 1]$ is a normalized dissimilarity\footnote{A dissimilarity is a function that is symmetric and such that $d(x, x) \equiv 0$; it is normalized if it takes values in $[0, 1]$.} and $g: [0, 1] \to [0, 1]$ is a monotonically decreasing convex function with $g(0) = 1$.
\end{assumption}

\begin{remark}\label{remark:interpretability}
    The kernels of \cref{sec:externalvalidity:kernels:kernelsforrankings} satisfy \cref{ass:interpretability} for every value of their parameters, and so does the RBF kernel on $[0,1]^{n_a}$ (\cref{app:kernels:vector}), with $d$ and $g$ as in \cref{tab:interpretable_kernels}. 
\end{remark}
    \begin{table}[t]
    \caption{Interpretable form $k = g\circ d$ (\cref{ass:interpretability}) of the kernels used in this work.}
    \label{tab:interpretable_kernels}
    \begin{center}
    \begin{tabular}{lll}
    \toprule
    Kernel & $d(x_1,x_2)$ & $g(u)$ \\
    \midrule
    Borda $k^{a^*,\nu}_\mathrm{b}$ & $|b_1-b_2|/(n_a-1)$ & $e^{-\nu(n_a-1)u}$ \\
    Jaccard $k^t_\mathrm{j}$ & $1-J_t(r_1,r_2)$ & $1-u$ \\
    Mallows $k^\nu_\mathrm{m}$ & $n_d/\binom{n_a}{2}$ & $e^{-\nu\binom{n_a}{2}u}$ \\
    RBF $k^\gamma_\mathrm{RBF}$  & $\|x_1-x_2\|^2/n_a$ & $e^{-\gamma n_a u}$ \\
    \bottomrule
    \end{tabular}
    \end{center}
\end{table}

For a desired maximum expected dissimilarity $\delta \in [0, 1]$, we can define $\eps \in \R^+$ to obtain an upper bound on the expected $\MMD$, as stated in the following lemma, proved in \cref{app:proofs:epsfromdelta}.
\begin{lemma}[Choice of $\eps$]
\label{lemma:epsfromdelta}
    Under \cref{ass:interpretability}, let $\delta \in [0, 1]$ and $\eps \coloneqq \sqrt{2 \lrp{1 - g(\delta)}}$.
    Then, for any $n \geq 1$,
    \begin{align}
        \E_{P \otimes P} d(X, X') \leq \delta
            &\implies \E_{P^{2n}} \MMD_n \leq \eps .
    \end{align}
\end{lemma}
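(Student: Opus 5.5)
The plan is to bound $\E\MMD_n$ by its second moment via Jensen, compute $\E\MMD_n^2$ exactly from the empirical formula \eqref{eq:mmd_empirical}, and then use convexity of $g$ to turn the hypothesis on $\E d$ into a bound on $\E k$.

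\textbf{Step 1 (Jensen).} Since $t\mapsto\sqrt t$ is concave,
\[
\E_{P^{2n}}\MMD_n \le \sqrt{\E_{P^{2n}}\MMD_n^2}.
\]
It therefore suffices to show $\E\MMD_n^2 \le 2(1-g(\delta))$.

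\textbf{Step 2 (second moment).} Take expectations in \eqref{eq:mmd_empirical} with $m=n$ and i.i.d.\ samples $\b X,\b X'\sim P^n$. Note that $k(x,x)=g(d(x,x))=g(0)=1$. In each within-sample double sum:
\begin{itemize}
\item the $n$ diagonal terms each equal $1$;
\item the $n(n-1)$ off-diagonal terms each have expectation $\E_{P\otimes P}k(X,X')$.
\end{itemize}
All $n^2$ cross terms have that same expectation. Writing $\kappa\coloneqq \E_{P\otimes P}k(X,X')$, this gives
\[
\E\MMD_n^2 = 2\Big[\tfrac1n + \tfrac{n-1}{n}\kappa\Big] - 2\kappa = \tfrac{2}{n}\,(1-\kappa).
\]

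\textbf{Step 3 (convexity).} Because $g$ is convex, Jensen gives $\kappa=\E g(d(X,X'))\ge g(\E d(X,X'))$. Because $g$ is decreasing and $\E d\le\delta$, we get $g(\E d)\ge g(\delta)$. Hence $1-\kappa\le 1-g(\delta)$, which is nonnegative since $g\le 1$. Combining with Step 2,
\[
\E\MMD_n^2 \le \tfrac{2}{n}\,(1-g(\delta)) \le 2\,(1-g(\delta)) \quad\text{for } n\ge1,
\]
and taking square roots finishes the proof. In fact this yields the sharper bound $\eps/\sqrt n$.

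\textbf{Main obstacle.} The only delicate point is the bookkeeping in Step 2: diagonal terms must be separated from off-diagonal ones, and the constant diagonal $k(x,x)=1$ is what makes the diagonal contribution exactly $1/n$. One should also confirm that all expectations exist; this holds because $k$ is bounded by $1$, since $g$ takes values in $[0,1]$.
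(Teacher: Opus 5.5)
Your proof is correct and matches the paper's argument step for step. You use Jensen on the square root, then the exact second moment $\frac{2}{n}(1-\kappa)$, which the paper proves as a separate auxiliary lemma for kernels with constant diagonal $k(x,x)=g(0)=1$. You then use convexity and monotonicity of $g$ to get $\kappa \ge g(\delta)$, and finally $n \ge 1$; your sharper bound $\eps/\sqrt{n}$ is also visible in the paper's chain just before its last inequality.
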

We next detail how the KEV helps assess the appropriate size of a study.

\subsection{On the Size of a Valid Study}
\label{sec:externalvalidity:sizeofstudy}

\cref{lemma:KEVmonotonicity} suggests that the KEV is overall increasing in the sample size $n$.
Thus, we define the minimum number of samples for a study to be $(\alpha, \eps, n)$-valid as
\begin{align}
\label{eq:nstardefinition}
    n^* \coloneqq \min \lrc{n: \ev_n(P, \eps) \geq \alpha}.
\end{align}
Under \cref{ass:interpretability}, the KEV is lower-bounded by a function that is monotonically increasing in all of $\eps$, $n$, and the $P\otimes P$-expectation of $k$, as detailed in the following lemma; its proof is in \cref{app:proofs:KEVmarkovlowerbound}.
\begin{lemma}[Lower bound]
\label{lemma:KEVmarkovlowerbound}
    For any kernel $k$ in the form of \cref{ass:interpretability}, $P \in \m M_1^+(\m X)$ and $\eps \in \R^+$, 
    \begin{align}
        \ev_n(P, \eps) \geq  1- \frac{2}{\eps^2 n}(1- \bar k_P),
    \end{align}
    where $\bar k_P \coloneqq \E_{P \otimes P} k(X, X')$.
\end{lemma}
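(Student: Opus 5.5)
We will prove the bound via Markov's inequality applied to the squared MMD. Since $\ev_n(P,\eps) = 1 - P^n\otimes P^n\lrp{\MMD_n > \eps} = 1 - P^n\otimes P^n\lrp{\MMD_n^2 > \eps^2}$ and $\MMD_n^2 \geq 0$, Markov gives
\begin{align}
    \ev_n(P,\eps) \geq 1 - \frac{\E_{P^{2n}} \MMD_n^2}{\eps^2}.
\end{align}
It therefore suffices to show $\E_{P^{2n}} \MMD_n^2 = \frac{2}{n}\lrp{1 - \bar k_P}$, or at least that it is bounded above by this quantity.

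We compute the expectation directly from \eqref{eq:mmd_empirical} with $m=n$, where $\b X,\b X'$ are independent i.i.d.\ samples from $P$. Under \cref{ass:interpretability}, $k(x,x) = g(d(x,x)) = g(0) = 1$, so the kernel is normalized and bounded. All expectations are then finite, which is consistent with \cref{ass:well-defined-KEV}(ii).

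For the first sum, the $n$ diagonal terms each contribute $1$. The $n(n-1)$ off-diagonal terms each have expectation $\bar k_P$, by independence of $X_i, X_j$ for $i\neq j$. Hence
\begin{align}
    \E \frac{1}{n^2}\sum_{i,j} k(X_i,X_j) = \frac1n + \frac{n-1}{n}\bar k_P.
\end{align}
The second sum is identical. In the cross sum, all $n^2$ pairs are independent, so it contributes $-2\bar k_P$. Summing the three pieces gives
\begin{align}
    \E_{P^{2n}}\MMD_n^2 = \frac2n + \frac{2(n-1)}{n}\bar k_P - 2\bar k_P = \frac2n\lrp{1-\bar k_P},
\end{align}
which yields the claim.

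The only mildly delicate step is justifying the diagonal value $k(x,x)=1$, which is where \cref{ass:interpretability} enters; the convexity and monotonicity of $g$ are not needed here. We would also note that when the right-hand side of the bound is negative or trivial, the statement holds vacuously.
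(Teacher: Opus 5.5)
Your proposal is correct and matches the paper's proof: Markov's inequality on $\MMD_n^2$, followed by the identity $\E_{P^{2n}}\MMD_n^2 = \frac{2}{n}(1-\bar k_P)$. The paper isolates that identity as an auxiliary lemma (using $k(x,x)=g(0)=1$), and you re-derive it inline with the same diagonal/off-diagonal split; your remark that the convexity and monotonicity of $g$ are not needed here is also accurate.
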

As a consequence, 
\begin{align}
\label{eq:nstarupperbound}
    n^* \leq \frac{2\lrp{1-\bar k_P}}{(1-\alpha)\eps^2},
\end{align}
and any study with results $P$ is $(\alpha, \eps, n)$-valid for every $n$ at least the right-hand side of \eqref{eq:nstarupperbound}.

Bound \eqref{eq:nstarupperbound} is rather loose: for a tighter estimate of $n^*$ that is viable in practice, we derive the following approximate power law between $n$ and the quantiles of $\MMD_n$, which we prove in \cref{app:proofs:mmdapproxpowerlaw}.
\begin{theorem}[Power Law]
\label{thm:mmdapproxpowerlaw}
    Let $k$ be a bounded and non-constant kernel, $\alpha \in [0, 1]$, and $q_\alpha(n) \coloneqq F_{\MMD_n}^{-1}(\alpha)$.
    Then, there exists $\beta_\alpha \in \R$ such that
    \begin{align}
    \label{eq:mmdpowerlaw}
        \log q_\alpha(n) + \frac12 \log n = \beta_\alpha + o(1).
    \end{align}
\end{theorem}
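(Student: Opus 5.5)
The plan is to show that $\sqrt{n}\,\MMD_n$ converges in distribution to a non-degenerate limit $Z$, and then to transfer this convergence to the quantiles. The paper points to this route: \eqref{eq:mmd_empirical} has the form of the V-statistic estimator of $\MMD(P,P)=0$, which is exactly the setting of \autoref{thm:mmd_limit}.

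\textbf{Step 1: the limit law.} Concatenate $\b W = \b X || \b X' \sim P^{2n}$. We can then write $n\MMD_n^2$ as a degenerate V-statistic with kernel
\begin{align*}
h(w_i,w_j) = \tilde k(x_i,x_j) + \tilde k(x'_i,x'_j) - \tilde k(x_i,x'_j) - \tilde k(x'_i,x_j),
\end{align*}
where $\tilde k$ is the $P$-centered kernel $\tilde k(x,y) = k(x,y) - \E_P k(x,Y) - \E_P k(X,y) + \E_{P\otimes P} k$. Centering changes nothing, because the constant parts cancel inside the MMD.

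Since $k$ is bounded and continuous on a separable space, $\tilde k$ admits a Mercer-type expansion in $L^2(P)$, $\tilde k(x,y) = \sum_\ell \lambda_\ell \phi_\ell(x)\phi_\ell(y)$ with $\lambda_\ell \ge 0$ and $\sum_\ell \lambda_\ell = \E_P \tilde k(X,X) < \infty$. The classical theory of degenerate V-statistics then gives
\begin{align*}
n\MMD_n^2 = \lrn{\tfrac{1}{\sqrt n}\textstyle\sum_i (\tilde\phi(X_i) - \tilde\phi(X'_i))}^2_{\m H_k} \xrightarrow{d} \textstyle\sum_\ell \lambda_\ell\,(Z_\ell - Z'_\ell)^2 = 2\sum_\ell \lambda_\ell \chi^2_{1,\ell} \eqqcolon Z^2 .
\end{align*}
A cleaner way to see this is the Hilbert-space CLT for the i.i.d.\ bounded centered embeddings $k(\cdot,X_i)-\mu_P$ in the separable space $\m H_k$, followed by the continuous mapping theorem applied to the norm. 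Taking square roots gives $\sqrt n\,\MMD_n \xrightarrow{d} Z$.

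\textbf{Step 2: quantile transfer.} The cdf of $Z$ is continuous on $(0,\infty)$, since $Z^2$ is a nondegenerate weighted sum of independent $\chi^2_1$ variables. It is also strictly increasing on its support. Convergence in distribution therefore implies convergence of quantiles at every level $\alpha\in(0,1)$: $\sqrt n\, q_\alpha(n) \to q^Z_\alpha > 0$. Taking logarithms yields \eqref{eq:mmdpowerlaw} with $\beta_\alpha = \log q^Z_\alpha$.

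\textbf{Main obstacle.} The hardest step is ensuring $Z$ is non-degenerate, i.e.\ that some $\lambda_\ell > 0$, so that $q^Z_\alpha > 0$ and the logarithm is finite. This needs $\E_P \tilde k(X,X) > 0$, that is, $\mu_P$ is not almost surely equal to $k(\cdot,X)$. The statement's assumption that $k$ is non-constant is meant to exclude the degenerate case. I would interpret it as non-constant on the support of $P$, i.e.\ $\mathrm{Var}_P\, k(\cdot,X) \ne 0$ in $\m H_k$; otherwise the earlier example in this section shows $\MMD_n \equiv 0$ and the claim fails.

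\textbf{Endpoints and ties.} The endpoints $\alpha\in\{0,1\}$ also need care. If the statement is meant to include them, one must use generalized inverses and accept $\beta_\alpha$ possibly infinite, or argue via boundedness of the support. I would restrict to $\alpha\in(0,1)$. For $\alpha$ with $q^Z_\alpha = 0$, which can only happen if $Z$ has an atom at $0$ and is ruled out above, the result would fail.

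Finally, for discrete $\m X$ the cdf of $\MMD_n$ has jumps. The generalized inverse still converges because the limiting cdf is continuous and strictly increasing at $q^Z_\alpha$. I would verify this with the standard sandwich argument $F_n(q^Z_\alpha \pm \eta) \to F_Z(q^Z_\alpha \pm \eta) \gtrless \alpha$.
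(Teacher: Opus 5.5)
Your proposal is correct and follows the paper's route: the paper likewise obtains $\sqrt n\,\MMD_n \xrightarrow{d} \sqrt Q$ from \cref{thm:mmd_limit} and the continuous mapping theorem, transfers this to quantiles via Lemma~21.2 of van der Vaart, and sets $\beta_\alpha = \log q_\alpha$. Your caveats are real improvements. The claim does fail at $\alpha\in\{0,1\}$ and when $k(\cdot,X)=\mu_P$ $P$-a.s.\ (the paper only notes in passing that some $\lambda_i\neq0$ is needed). The quantile step needs the limiting cdf to be strictly increasing at $q_\alpha$, not merely continuous as the paper says. Your Hilbert-space CLT argument also gives the joint convergence of the three MMD addends, which the paper's addend-by-addend proof of \cref{thm:mmd_limit} leaves implicit.
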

\begin{remark}
\label{remark:powerlaw_getnstar}
    Given a kernel $k$ satisfying the assumptions of \cref{thm:mmdapproxpowerlaw} and desired $\alpha\in[0, 1]$ and $\eps\in\R^+$, we can fit $\beta_\alpha$ as in \eqref{eq:mmdpowerlaw} and estimate $\log n^* = 2\beta_\alpha -2\log \eps$. 
    If instead $k(x, x')\equiv c$, we have that $q_\alpha(n) = 0$ for any $\alpha$ and $n$, and we set $n^* = 1$.
\end{remark}

To discuss practical applications of \cref{thm:mmdapproxpowerlaw} and \cref{remark:powerlaw_getnstar}, we first provide the following consistency result for the plug-in bootstrap estimator of the KEV, proved in \cref{app:proofs:bootstrap_consistency}.
\begin{lemma}[Bootstrap Consistency]
\label{lemma:bootstrap_consistency}
    Let $\b Y \sim P^N$ and $P_N \coloneqq \hat P_{\b Y}$ be its empirical distribution.
    Define $\widehat \MMD_{n; N}|\b Y \coloneqq \MMD(\hat P_{\b Z}, \hat P_{\b Z'})$, where $\b Z, \b Z' \sim P_N^n$ independently. 
    
    Then, for $P^N\aew\ \b Y \in \m X^N$ and for $n, N \to \infty$, $n\MMD_n^2$ and $\tfrac N2 \widehat \MMD^2_{N/2; N}|\b Y$ converge in distribution to the same random variable.
\end{lemma}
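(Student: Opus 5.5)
The plan is to identify a common limit law for both statistics by writing each one as a degenerate V-statistic, and then to show that the bootstrap kernel converges to the population kernel. First treat $n\MMD_n^2$. Pool the two samples as $\b W = \b X || \b X' \sim P^{2n}$ and assign the weights $w_i = 1/n$ for $i\le n$ and $w_i=-1/n$ for $i>n$. Then \eqref{eq:mmd_empirical} reads
\begin{align}
\MMD_n^2 = \sum_{i,j=1}^{2n} w_i w_j\, \tilde k(W_i,W_j),
\end{align}
where $\tilde k$ is the $P$-centered kernel $\tilde k(x,y)=\lrang{k(\cdot,x)-\mu_P,\,k(\cdot,y)-\mu_P}_{\m H_k}$. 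Centering does not change the expression because the weights sum to zero. Equivalently, $\sqrt n\,(\mu_{\hat P_{\b X}}-\mu_{\hat P_{\b X'}}) = n^{-1/2}\sum_{i\le n}\bigl(\phi(X_i)-\phi(X'_i)\bigr)$ with $\phi = k(\cdot,x)-\mu_P$, which is a normalized sum of i.i.d.\ centered elements of the separable Hilbert space $\m H_k$. These summands have bounded second moment because $k$ is bounded (\cref{ass:well-defined-KEV} (ii), \cref{remark:well-defined-KEV} (b)). The Hilbert-space CLT then gives $\sqrt n(\mu_{\hat P_{\b X}}-\mu_{\hat P_{\b X'}})\to G$ in distribution, where $G$ is a centered Gaussian element with covariance operator $2C_P$ and $C_P = \E_P[\phi\otimes\phi]$. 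The continuous mapping theorem applied to $\lrn{\cdot}^2_{\m H_k}$ yields $n\MMD_n^2 \to \lrn{G}^2 = \sum_\ell 2\lambda_\ell Z_\ell^2$, where the $\lambda_\ell$ are the eigenvalues of $C_P$ and the $Z_\ell$ are i.i.d.\ $\m N(0,1)$.

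Next treat the bootstrap statistic, conditionally on $\b Y$, by the same argument as a triangular array. With $m=N/2$, the pooled resample consists of $2m=N$ i.i.d.\ draws from $P_N$. The embedding difference scaled by $\sqrt m$ is a normalized sum of i.i.d.\ terms centered at $\mu_{P_N}$, bounded in norm by $2\sup\sqrt{k}$, with covariance operator $2C_{P_N}$, where $C_{P_N}=\frac1N\sum_{i}(k(\cdot,Y_i)-\mu_{P_N})^{\otimes 2}$. A Lindeberg CLT for triangular arrays in separable Hilbert spaces gives conditional convergence to $\m N(0,2C)$, provided $C_{P_N}\to C_P$ in trace norm, which is the covariance condition relevant for convergence of the squared norm. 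Then $\tfrac N2\widehat\MMD^2_{N/2;N}|\b Y = m\,\MMD^2 \to \sum_\ell 2\lambda_\ell Z_\ell^2$ as well, so both statistics share the same limit.

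The main obstacle is proving the almost sure trace-norm convergence $C_{P_N}\to C_P$ and checking the Lindeberg condition uniformly along $\b Y$. Lindeberg is immediate because the summands are uniformly bounded by a constant independent of $N$, so the truncation term vanishes once $\sqrt m$ exceeds that bound. For the covariance, $C_{P_N} = \frac1N\sum_i k(\cdot,Y_i)^{\otimes 2} - \mu_{P_N}\otimes\mu_{P_N}$. Each part converges a.s.\ by the strong law of large numbers in the separable Hilbert space of Hilbert–Schmidt operators, using boundedness of $k$ and separability of $\m H_k$. Convergence in trace norm follows because all these operators are positive, so the trace converges as a scalar SLLN: $\operatorname{tr} C_{P_N} = \frac1N\sum_i k(Y_i,Y_i)-\lrn{\mu_{P_N}}^2\to \operatorname{tr}C_P$. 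Hilbert–Schmidt convergence together with convergence of traces of positive operators then yields trace-norm convergence. The final step is to confirm that Gaussian measures on $\m H_k$ depend continuously on the covariance in trace norm, for example via characteristic functionals plus tightness obtained from the trace bound. This gives the claimed identical limit for $P^N$-a.e.\ $\b Y$. If the limit is degenerate (all $\lambda_\ell=0$), both statistics converge to $0$ and the claim is trivial.
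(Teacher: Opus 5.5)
Your proof is correct, but it takes a different route from the paper's.

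\textbf{The paper's route.} It writes $\MMD_n^2$ as a one-sample degenerate V-statistic in the pairs $W_i=(X_i,X'_i)$, with kernel $h(w_1,w_2)=k(x_1,x_2)+k(x'_1,x'_2)-k(x_1,x'_2)-k(x'_1,x_2)$. It takes the limit $2\sum_i\lambda_i Z_i^2$ from its Theorem~A (Serfling plus Gretton et al.). It then checks that $h\in\tilde L_2$ and that $\pi^{Q^2}h=h$ for \emph{every} $Q$, so the naive bootstrap needs no recentring. Finally it invokes Arcones and Gin\'e's bootstrap theorem for degenerate V-statistics.

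\textbf{Your route.} You use that $\MMD$ is the $\m H_k$-norm of a normalized sum of the i.i.d.\ elements $k(\cdot,Z_i)-k(\cdot,Z'_i)$. These are exactly centred under any product law, which is the Hilbert-space form of the paper's identity $\pi^{Q^2}h=h$. A Hilbert-space CLT plus continuous mapping then reproves Theorem~A in one step. This avoids recombining separately obtained limits of the three addends, which tacitly needs joint convergence. The Lindeberg CLT for triangular arrays, fed by the almost sure trace-norm convergence $C_{P_N}\to C_P$, then gives the bootstrap statement.

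\textbf{What each buys.} The paper's argument is shorter and sits inside general U/V-statistic theory, so it would carry over to degenerate V-statistics that are not squared Hilbert norms. Yours is self-contained and relies only on standard facts: Hilbert-space Lindeberg--Feller, the SLLN in the Hilbert--Schmidt class, trace convergence for positive operators, and continuity of the feature map. It also handles directly the resampling law the lemma actually uses, namely pairs drawn from $P_N\otimes P_N$. Arcones and Gin\'e's theorem is formulated for resampling from the empirical measure of the sample underlying the V-statistic, which here would be the empirical law of the pairs $W_i$. The paper's citation therefore needs an extra identification step that your argument avoids.

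Your closing step on continuity of Gaussian measures in the covariance is redundant once the Lindeberg CLT is stated with limiting covariance $2C_P$, but it is harmless.
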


With this result, we are ready to propose incorporating the estimate of $n^*$ provided by \cref{thm:mmdapproxpowerlaw} into the execution of an experimental study, as described in \cref{alg:studyexecution}. 
\begin{algorithm}[t]
    \caption{Incremental execution of a study}
    \label{alg:studyexecution}
    \begin{algorithmic}[1]
        \Require kernel $k$, validity level $\alpha$, dissimilarity $\delta$, increment $N_0$
        \State $\varepsilon \gets \varepsilon(\delta)$; $\b y \gets \emptyset$; $N \gets 0$; $\hat n^*_N \gets \infty$
        \While{$\hat{n}^*_{N} > N$}
            \State $\b y \gets \b y \,\|\, \texttt{RunExperiments}(N_0)$        
            \State $N \gets \lra{\b y}$
            \State $\hat{n}^*_N \gets \texttt{Estimate}n^*(\b y, k, \alpha, \eps)$ \Comment{cf.\ \cref{remark:powerlaw_getnstar}}
        \EndWhile\\
        \Return $\b y$
    \end{algorithmic}
\end{algorithm}
Note that stopping at the first step where $\hat n^*_N \leq N$ may end the study prematurely if $\hat n^*_N$ underestimates $n^*$.
A possible mitigation, which we do not consider in this work, would be to add a safety margin $\rho >0$ and to stop only when $(1 + \rho)\hat n^*_N \leq N$.

Finally, \cref{app:resampling:schemes} compares four resampling schemes, including the bootstrap covered by \cref{lemma:bootstrap_consistency} and a permutation scheme which mimics actual replications, as it never repeats an experimental condition; \cref{app:resampling:incremental} compares incremental and fresh sampling of $\b y$ in \cref{alg:studyexecution}.

\section{EXPERIMENTS}
\label{sec:experiments}

This section first compares the bootstrap and permutation approaches to resampling. 
Then, we show how to analyze experimental studies for their external validity, treating every configuration as independent. 
Finally, we show a practical application of \cref{alg:studyexecution} to a simulated experimental study.

In our experiments, we set $\nu = 1/(n_a-1)$ for $\kernel[b]$, $\nu = 1/\binom{n_a}{2}$ for $\kernel[m]$, and $\gamma = 1/n_a$ for $\kernel[RBF]$, so that $\varepsilon(\delta) = \sqrt{2(1-e^{-\delta})}$ for these three kernels; for $\kernel[j]$, for which we set $t=1$, we have $\varepsilon(\delta) = \sqrt{2\delta}$.

For both case studies, we use the bootstrap estimator.

\textbf{Runtime}\; 
The main bottleneck of the estimation is bootstrapping the distribution of $\MMD_n$. 
Our implementation has complexity $\m O(n_u^2 n_{\text{rep}})$, where $n_u$ is the number of unique results obtained and $n_{\text{rep}}$ is the number of resamples. 
On a consumer laptop, the total runtime\footnote{For 48 configurations, 4 kernels, for every $n= 1, \dots, 20$, and for 200 resamples.} was $7$ minutes in the encoders case study (\cref{sec:case_studies:encoders}), averaging about $2.2$ seconds per configuration-kernel pair.
The runtime is comparable for the LLM case study of \cref{sec:case_studies:LLMs}.

\subsection{Resampling Methods}
\label{sec:experiments:estimator}

\begin{figure}[t]
    \centering
    \includegraphics[width=0.95\linewidth]{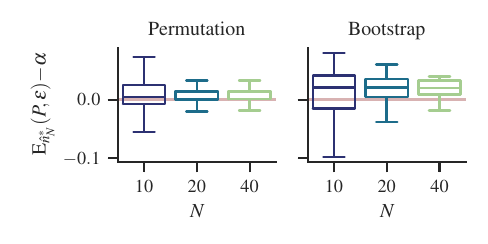}
    \caption{Actual ($\ev_{\hat n^*_N}(P, \eps)$) vs. desired $(\alpha)$ validity.}
    \label{fig:alpha_vs_evnstar}
\end{figure}

We first investigate the permutation and bootstrap approaches to estimate the minimum number of experiments needed to achieve $(\alpha=0.95, \eps(\delta=0.05), n^*)$-valid results for the Mallows kernel as in \eqref{eq:mallows_kernel}. 
In particular, we compare the desired validity, $0.95$, with the actual validity at the estimated $n^*_N$, $\ev_{\hat n^*_N}(P, \eps)$: the ideal approach is cautiously pessimistic and shows a slightly positive difference between the two.
We use a sample of size $N=10, 20, 40$ from $P = \m U(\m R_8)$,\footnote{The uniform distribution over rankings of 8 alternatives; $\lra{\m R_8} = 545\,835$, cf. \url{https://oeis.org/A000670}.} and perform both a bootstrap and a permutation approach to (1) get the distribution of $\MMD_n$ for $n = 1, \dots, N/2$; (2) get their quantiles $q_\alpha(n)$; (3) fit $\log n = -2 \log q_\alpha(n) + 2\beta_\alpha$ according to \cref{thm:mmdapproxpowerlaw}; (4) define $\log \hat n^*_N = 2\beta_\alpha -2 \log \eps$; (5) compute $\ev_{\hat n^*_N}(P, \eps)$ and compare it with $\alpha$.
\cref{fig:alpha_vs_evnstar} shows that both methods yield an actual validity within $0.1$ of $\alpha$ already for $N=10$, and within $0.05$ for $N\ge20$: reasonable estimates of $n^*$ can thus be obtained even from small preliminary studies. \Cref{app:resampling} contains additional details.

\subsection{Case Study: Encoders}
\label{sec:case_studies:encoders}

\begin{figure}[t]
    \centering
    \includegraphics[width=0.99\linewidth]{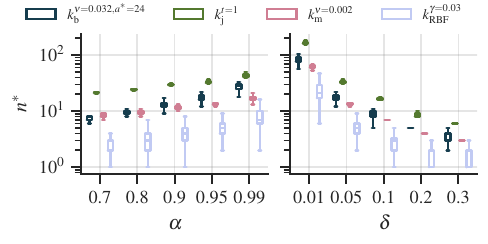}
    \caption{Necessary size $n^*$ to achieve $(\alpha, \eps(\delta), n^*)$-validity. (Left) $\delta = 0.05$; (Right) $\alpha = 0.95$. Variability across configuration.}
    \label{fig:encoders_nad}
\end{figure}
    
\begin{figure}
    \centering
    \includegraphics[width=0.93\linewidth]{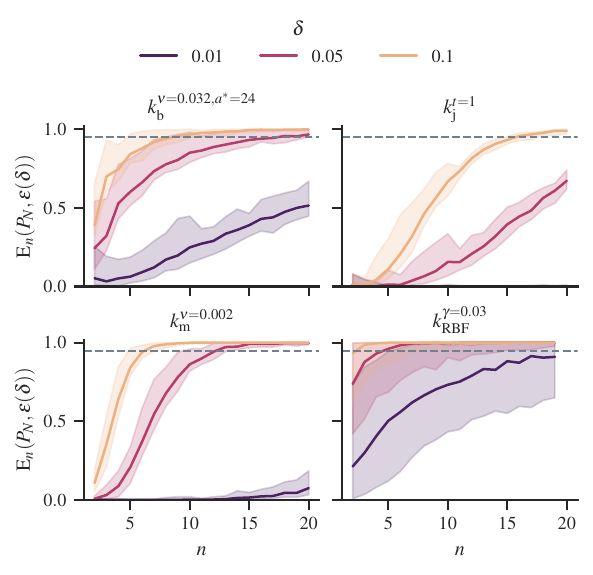}
    \caption{KEV for varying $\delta$-s and $n$-s. The shaded regions correspond to all configurations.}
    \label{fig:encoders_rn}
\end{figure}

We now assess the external validity of \citet{matteucci2023encoders}, an experimental study evaluating encoders for categorical data.\footnote{Experimental results released with the original study; MIT license; accessed on the 5th of October 2026.} 
The study evaluates 32 encoders on binary classification datasets using as proxy the performance of a model trained on the encoded data.
We consider as fixed factors \texttt{model}, \texttt{tuning}, and \texttt{metric}; as random factor \texttt{dataset}.
We are interested in assessing the external validity when datasets are sampled at random from the pool of datasets considered in the original study, with the other factors fixed.

We include in the analysis only the experimental conditions in which $\ge80\%$ of the encoders were run and, among those, the encoders that were run on $\ge80\%$ of the remaining conditions.\footnote{Remaining missing evaluations are imputed with the worst rank.}
This leaves us with $32 / 32$  alternatives, $1872 / 2036$ conditions, and $48$ configurations.
We consider the following research questions (kernels):
\textit{``Does \textnormal{\texttt{one-hot}} ($a^*=24$) rank consistently?''} ($\kernel[b]$);
\textit{``Are the best encoders consistently the same ones?''} ($\kernel[j]$);
\textit{``Are the encoders ranked consistently?''} ($\kernel[m]$);
\textit{``Are the raw performances of the encoders consistent?''} ($\kernel[RBF]$).
To show how the framework behaves on numerical data, we include the radial basis function (RBF) kernel; further details are in \cref{app:kernels:vector}.

\cref{fig:encoders_nad} shows that the number of experiments needed to achieve $(\alpha, \eps(\delta), n^*)$-validity is increasing with $\alpha$ and decreasing with $\delta$.
\cref{fig:encoders_rn} shows that the validity depends on the kernel chosen and increases with both sample size $n$ and $\delta$; recall that a greater $\delta$ corresponds to a looser condition on what results we consider similar, cf. \cref{lemma:epsfromdelta}.
The two figures agree: taking, for instance, $\kernel[j]$, \cref{fig:encoders_nad} shows that the median configuration achieves $(0.95, \eps(0.1), \approx15)$-validity, and \cref{fig:encoders_rn} shows that the results are indeed valid for any $n \gtrsim 15$.

\subsection{Case Study: LLMs}
\label{sec:case_studies:LLMs}

\begin{figure}[htb]
    \centering
    \includegraphics[width=0.99\linewidth]{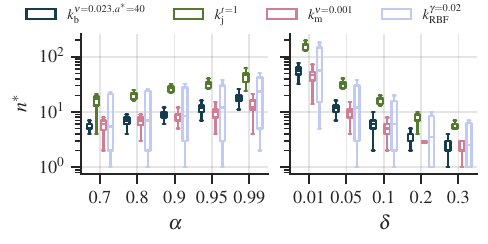}
    \caption{Necessary size $n^*$ to achieve $(\alpha, \eps(\delta), n^*)$-validity. (Left) $\delta = 0.05$; (Right) $\alpha = 0.95$. Variability across configuration.}
    \label{fig:llms_nad}
\end{figure}
    
\begin{figure}[htb]
    \centering
    \includegraphics[width=0.93\linewidth]{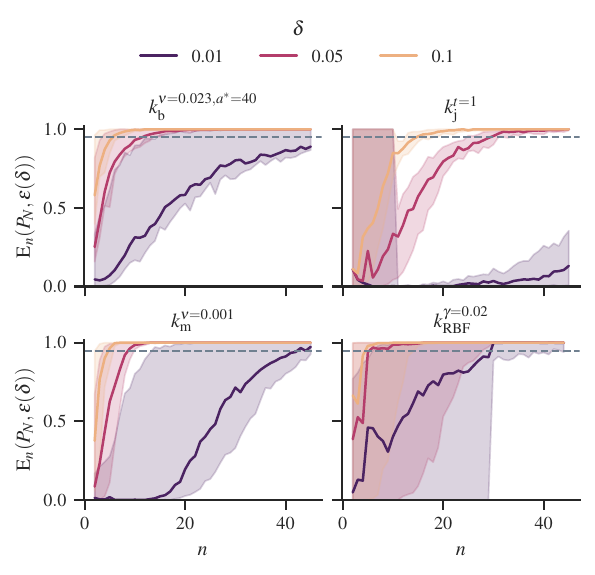}
    \caption{KEV for varying $\delta$-s and $n$-s. The shaded regions correspond to all configurations.}
    \label{fig:llms_rn}
\end{figure}

We now turn to the other case study, \citet{srivastava2023beyond}, a collaborative benchmark of Large Language Models (LLMs).\footnote{Experimental results released with the original study; MIT license; accessed on the 5th of October 2026; Apache 2.0 license.} 
The study evaluates 55 LLMs on a suite of tasks, each with its subtasks and a recommended evaluation metric.
We take \texttt{task} and \texttt{num\_shots} as fixed factors, and \texttt{subtask} as a random factor.
We are therefore interested in assessing the external validity when subtasks are sampled at random from the pool of those considered in the original study, with the other factors fixed.

We apply the same filtering as in \cref{sec:case_studies:encoders}, which leaves $45 / 55$  alternatives and $2269 / 5448$ conditions across $24$ configurations.

We consider the following research questions (kernels):
\textit{``Does \textnormal{\texttt{GPT-3}} ($a^*=40$) rank consistently?''} ($\kernel[b]$);
\textit{``Are the best LLMs consistently the same ones?''} ($\kernel[j]$);
\textit{``Are the LLMs ranked consistently?''} ($\kernel[m]$);
\textit{``Are the raw performances of the LLMs consistent?''} ($\kernel[RBF]$).

\cref{fig:llms_nad} shows that the size needed to achieve $(\alpha, \eps(\delta), n^*)$-validity is increasing with $\alpha$ and decreasing with $\delta$.
\cref{fig:llms_rn} illustrates how the KEV depends on the kernel chosen and increases with both sample size $n$ and $\delta$; recall that a greater $\delta$ corresponds to a looser condition on what results we consider similar, cf. \cref{lemma:epsfromdelta}.
Compared to the case study of \cref{sec:case_studies:encoders}, there are more pathological examples where the KEV is close to 1 for any $n$.
This can arise when most LLMs fail most subtasks of a task: their rankings are then tied, and coincide regardless of the subtask.

\subsection{Proposed Execution of a Study}
\label{sec:experiments:execution}

\begin{figure*}[t!]
    \centering
    \includegraphics[]{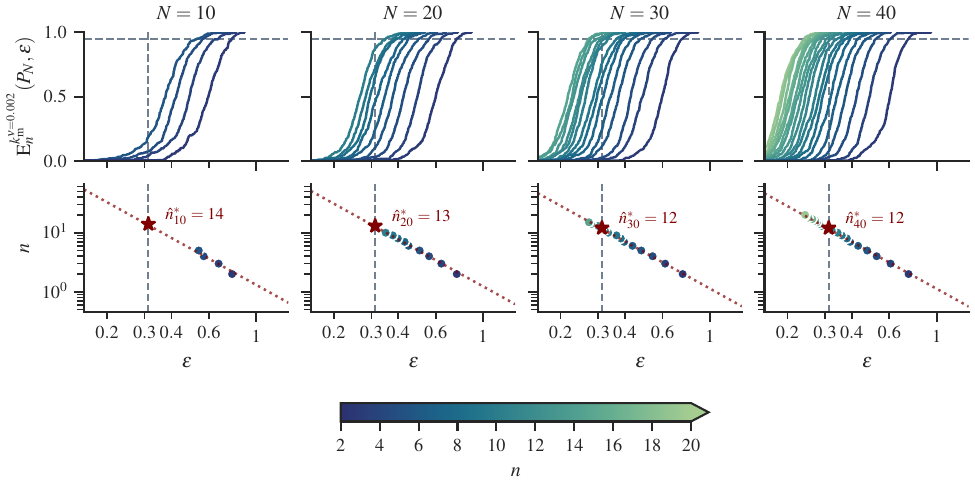}
    \caption{Proposed execution of an experimental study. (Top) The KEV is estimated from $N$ preliminary experiments, for $n\in [\lfloor N/2 \rfloor]$. (Bottom) A power law relation is fitted between $n$ and the $0.95$-quantile of $\MMD_n$; $\hat n^*_N$ is the prediction at $\eps(\delta=0.05) = 0.312$ (\textcolor{Mahogany}{$\star$}).} 
    \label{fig:encoders_nstarestimation}
\end{figure*}

In this section, we simulate\footnote{The results are precomputed and sampled in batches.} running studies following \cref{alg:studyexecution} to determine a reasonable number of experiments $n^*$ yielding $(\alpha, \eps, n^*)$-valid results, see \eqref{eq:nstardefinition}.
We apply it to the fixed configuration $(\texttt{model},\texttt{tuning}, \texttt{metric}) = (\texttt{DTC}, \texttt{no}, \texttt{F1})$ from~\citet{matteucci2023encoders}, the Mallows kernel \eqref{eq:mallows_kernel}, $\alpha=0.95$, and $\eps(\delta = 0.05) = 0.312$ (cf. \cref{lemma:epsfromdelta}), and we run experiments in increments of $N_0=10$, as depicted in \cref{fig:encoders_nstarestimation}.

From $N$ experiments, we estimate the KEV and, from it, the $\alpha$-quantiles of $\widehat{\mathrm{MMD}}_{n;N}|\mathbf y$ for $n = 1, \dots, \lfloor N/2 \rfloor$ (top row); we then fit the power law of Theorem~\ref{thm:mmdapproxpowerlaw} and compute $\hat n^*_N$ as in Remark~\ref{remark:powerlaw_getnstar} (bottom row). 
If $\hat n^*_N > N$, we sample $N_0$ more datasets, run the corresponding experiments, and repeat the analysis (left to right).
\Cref{alg:studyexecution} stops at $N=20$, where $\hat n^*_{20}=13\leq N$, and the estimates obtained from more experiments are stable: $\hat n^*_{30}=\hat n^*_{40}=12$. 
The prediction agrees with the KEV estimated from all $N=40$ available results, which include the preliminary ones: $\ev_{13}(P_{40},\varepsilon)=0.975\geq\alpha$. 

\section{LIMITATIONS}
\label{sec:conclusions}

Our framework requires a factorial design in which the levels of the random factors can be sampled independently; as a consequence, we do not cover, for instance, active experimental designs.
Our estimates concern the pool of experimental conditions considered in a study, that is, $P_N$; extending them to unseen conditions, such as new datasets, requires the pool to be representative of the population of interest. 
Moreover, we impute missing evaluations with the worst rank, whose effect on the KEV we do not quantify, and we validate the estimation of $n^*$ on a single synthetic distribution.
Finally, in our implementation, we did not consider more efficient computational methods for estimating the distribution of the MMD, such Nystr\"om approaches \citep{chatalic2026nystroemmmd}.

\section*{AI use statement}
In this work, we used generative AI tools for mathematical proofs and method implementation. 
We have not used generative AI tools for synthetic-data generation, theoretical or conceptual development, hypotheses, research methodology, experiments, or interpretation of results.
Translation, dataset cleaning or reformatting, and qualitative analysis are not applicable to this work. 
Additionally, we used generative AI tools for literature discovery, coding, and drafting text. 
We have reviewed all AI-assisted work. 
The authors read and verified the validity of the literature found. 
LLM-assisted proofs were checked line by line by the first author. 
LLM-produced code was thoroughly checked and tested and mainly concerns optimizations of pre-existing code. 
We take responsibility for the final content of this work, including text, claims, or artifacts produced with the aid of generative AI.

\bibliography{bibarxiv}

\clearpage
\appendix
\thispagestyle{empty}

\onecolumn
\aistatstitle{Kernel-based External Validity of Experimental Studies: \\
Supplementary Materials}
 
\appendix
\crefalias{section}{appendix}
\crefalias{subsection}{appendix}

\paragraph{Outline}
\cref{app:studies} extends the classification of experimental factors from \cref{sec:background:studies};
\cref{app:kernels} discusses in detail the kernels of \cref{sec:externalvalidity:kernels:kernelsforrankings}, including the related proofs;
\cref{app:resampling} evaluates different resampling schemes;
\cref{app:proofs} contains the proofs of the results in the main text as well as of additional results.

\section{Experimental factors (extended)}
\label{app:studies}

This section elaborates on the classification of (experimental) factors in~\cref{sec:background:studies}.
We recall that a factor is anything that may affect the result of an experiment; we sort factors into categories adapted from~\citet[Chapter~1]{montgomery2017design}, according to the role that a factor plays in a study.
\begin{itemize} 
    \item \emph{Noise factors} are included in the study only to ``average out'' the effect of randomness in either the benchmarked algorithms or the experimental conditions; the results are usually aggregated w.r.t. these factors before proceeding to further analysis (for instance, by taking the average); common examples are \texttt{seed} or \texttt{cross-validation fold}.
    \item \emph{Held-constant factors} are presumed not to significantly impact the results; they are thus out of the study's scope and are fixed to a single level; examples might be \texttt{programming language} or \texttt{operating system}. 
    \item \emph{Design factors} are expected to significantly impact the results and have a relatively small set of levels; 
    examples are \texttt{metric} or \texttt{number of shots}. 
    \item \emph{Random factors}
    have a larger number of levels; the experimenter's goal is to obtain results that are stable under changes of these factors' levels, that is, to perform a study which is random-factor-externally-valid; examples are \texttt{dataset} or \texttt{chessboard position}. 
\end{itemize}
Notice that the classification above is purely semantic: the same factor may play different roles in different studies, according to the study's objectives.
For instance, a large-scale comparison study of classifiers will likely consider \texttt{dataset} as a random factor, as the results are expected to generalize w.r.t. the choice of dataset. 
Conversely, a study focusing on evaluating prediction models on a single dataset will consider \texttt{dataset} as a held-constant factor and \texttt{resample} as a random factor.

\section{Kernels}
\label{app:kernels}

\subsection{Kernels for Rankings}
\label{app:kernels:rankings}

This section proves that the functions $\Ra \times \Ra \to \R$ introduced in \cref{sec:externalvalidity:kernels:kernelsforrankings} are indeed kernels for rankings.
In other words, we prove that they are symmetric and positive definite.
As all three kernels are clearly symmetric, it remains to show their positive definiteness.

We now proceed with the proofs.
\begin{lemmaA}[The Borda kernel is positive definite]
    Let, for $a^* \in [\nalt]$ and $\nu \in \R^+$,
    \begin{equation}
    \label{app:eq:borda_kernel}
        \kernel[b][\alt^*, \nu]\lrp{\ra_1, \ra_2} =  e^{-\nu \lra{b_1 - b_2}} \quad \in \lrb{e^{-\nu (\nalt-1)}, 1},
    \end{equation}
    where $b_l = \lra{\lrc{\alt\in[\nalt]: \ra_l(\alt) \geq \ra_l(\alt^*)}}$ is the Borda count~\citep{borda1781memoire}.
    Then, $\kernel[b]$ is positive definite.
\end{lemmaA}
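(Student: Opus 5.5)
The plan is to write $\kernel[b][\alt^*,\nu]$ as the composition of the Laplace kernel on $\R$ with a feature map, and then use the fact that positive definiteness survives pullback. Define $\phi:\Ra\to\R$ by $\phi(\ra)=\lra{\lrc{\alt\in[\nalt]:\ra(\alt)\geq\ra(\alt^*)}}$, so that $b_l=\phi(\ra_l)$. Then
\[
\kernel[b][\alt^*,\nu](\ra_1,\ra_2)=\kappa_\nu\lrp{\phi(\ra_1),\phi(\ra_2)},
\qquad \kappa_\nu(s,t)\coloneqq e^{-\nu\lra{s-t}} .
\]

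\textbf{Step 1: pullback.} If $\kappa$ is positive definite on $\R$, then so is $(\ra_1,\ra_2)\mapsto\kappa(\phi(\ra_1),\phi(\ra_2))$. To see this, fix $\ra_1,\dots,\ra_m\in\Ra$ and $c\in\R^m$. Then
\[
\sum_{i,j}c_ic_j\,\kappa\lrp{\phi(\ra_i),\phi(\ra_j)}\geq 0
\]
is exactly the positive-definiteness condition for $\kappa$ at the points $\phi(\ra_1),\dots,\phi(\ra_m)$. These points may repeat, but that is allowed in the condition. This step is immediate.

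\textbf{Step 2: the Laplace kernel is positive definite.} This is the main point. First I would invoke Bochner's theorem. The function $t\mapsto e^{-\nu\lra{t}}$ is the Fourier transform of the Cauchy density
\[
\frac{\nu}{\pi(\nu^2+\omega^2)},
\]
which is a nonnegative finite measure. Hence
\[
e^{-\nu\lra{s-t}}=\int_\R \cos\lrp{\omega(s-t)}\,\d\mu(\omega),
\]
and each cosine term $\cos(\omega s)\cos(\omega t)+\sin(\omega s)\sin(\omega t)$ is positive semidefinite.

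A self-contained alternative applies because all $\phi(\ra)$ lie in $[\nalt]$, a chain of integers. After sorting the points, the Gram matrix of $e^{-\nu\lra{s-t}}$ on $s_1<\dots<s_m$ is the covariance matrix of an AR(1)/Ornstein--Uhlenbeck process. Equivalently, $e^{-\nu\lra{s-t}}=\rho^{\lra{s-t}}$ with $\rho=e^{-\nu}\in(0,1)$, and the matrix factors as $LL^\transpose$ via the stationary Markov representation $Z_{s}=\rho Z_{s-1}+\sqrt{1-\rho^2}\,\xi_s$. Either argument settles Step 2; I would cite Bochner for brevity.

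Combining Steps 1 and 2 shows that $\kernel[b]$ is positive definite. The range statement $\lrb{e^{-\nu(\nalt-1)},1}$ follows because $b_l\in[\nalt]$, so $\lra{b_1-b_2}\leq\nalt-1$. The only care needed is that positive definiteness here means positive semidefinite Gram matrices, not strict definiteness. Strict definiteness indeed fails, because distinct rankings with the same Borda count of $\alt^*$ give identical feature vectors.
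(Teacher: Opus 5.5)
Your proof is correct. It shares the paper's reduction: both factor $\kernel[b][\alt^*,\nu]$ through the scalar map $\ra\mapsto b$ into $\R$, and then need positive definiteness of $e^{-\nu|s-t|}$ there. The difference is the key lemma used for that last fact.

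The paper uses Schoenberg's theorem. It observes that $d(\ra_1,\ra_2)=|b_1-b_2|$ is realized isometrically in $\R$, concludes that $-d$ is conditionally positive definite, and hence that $e^{-\nu d}$ is positive definite for every $\nu>0$. You instead use Bochner (the Laplace kernel is the characteristic function of a Cauchy law), or more elementarily the AR(1) covariance representation on the integer chain, together with an explicit pullback lemma.

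The Schoenberg route is the more reusable template, since any pseudometric of negative type yields such exponential kernels. Yours is fully self-contained and makes explicit two points the paper leaves implicit:
\begin{itemize}
\item $d$ is only a pseudometric on $\Ra$, not a distance. Distinct rankings can share the Borda count of $\alt^*$. This is harmless for the pullback but, as you note, rules out strict definiteness.
\item A Hilbert-space isometric embedding of $d$ directly gives conditional positive definiteness of $-d^2$ only. Passing to $-d$ itself needs Schoenberg's metric-transform result ($d^{1/2}$ also embeds), or the fact that $|s-t|$ is of negative type on $\R$. Your Bochner argument establishes the latter directly.
\end{itemize}
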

\begin{proof}
    We show that there exists a distance $d$ such that $(\Ra, d)$ embeds isometrically into some $L^2$ space.
    This ensures that $-d$ is a conditionally positive definite (c.p.d.) function and, thus, that $e^{-\nu d}$ is positive definite~\citep{schoenberg1938metric, schoelkopf2000kernel}.
    Define the distance 
    \begin{align}
        d : \Ra \times \Ra &\to \R^+\\
            (r_1, r_2)  &\mapsto \lra{b_1-b_2},
    \end{align}
    where $b_l$ is as in \eqref{app:eq:borda_kernel}.
    Now, $(\Ra, d)$ is isometrically embedded into $(\R, \lrn{\cdot}_2)$ via the map $r_l \mapsto b_l$. 
    Hence, $-d$ is c.p.d. and $\kernel[b]$ is positive definite.
\end{proof}

\begin{lemmaA}[The Jaccard kernel is positive definite]
    Let, for a tier $t \in \N$, 
    \begin{equation}
        \label{app:eq:jaccard_kernel}
        \kernel[j][t]\lrp{\ra_1, \ra_2} \coloneqq J_t(\ra_1, \ra_2) = \frac {\lra{\ra_{1, [t]} \cap \ra_{2, [t]}}} {\lra{\ra_{1, [t]} \cup \ra_{2, [t]}}} \quad\in [0, 1],
    \end{equation}
    where, for $l=1, 2$, $\ra_{l, [t]} \coloneqq \lrc{\alt\in[\nalt]: \ra_l(\alt) \leq t}$ is the set of alternatives with rank at most $t$.
    Then, $\kernel[j]$ is a kernel.
\end{lemmaA}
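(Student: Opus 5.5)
We will show that the Jaccard kernel is positive definite by writing it as a sum of products of explicit feature-map inner products. The standard route is the min–max decomposition of the Jaccard similarity. Encode each ranking $\ra$ by the indicator vector $\b v_\ra \in \{0,1\}^{\nalt}$ of its top tier $\ra_{[t]}$. Then $|\ra_{1,[t]}\cap\ra_{2,[t]}| = \lrang{\b v_{\ra_1}, \b v_{\ra_2}}$ and $|\ra_{1,[t]}\cup\ra_{2,[t]}| = |\ra_{1,[t]}|+|\ra_{2,[t]}| - |\ra_{1,[t]}\cap\ra_{2,[t]}|$. Because $\kernel[j]$ depends on $\ra$ only through the set $\ra_{[t]}$, it suffices to prove that the Jaccard similarity $J(A,B)=|A\cap B|/|A\cup B|$ is positive definite on subsets of $[\nalt]$. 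The kernel is then the pullback of $J$ along $\ra\mapsto\ra_{[t]}$, which preserves positive definiteness.

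The set $\ra_{[t]}$ may be empty, for example when $t$ is smaller than every rank, although ranks start at $1$ and $t\ge1$. I will therefore adopt the convention $J(\emptyset,\emptyset)=1$ and note that $\ra_{[t]}\neq\emptyset$ whenever $t\geq 1$, since some alternative has rank $1$. With that settled, the main step uses the identity
\begin{align}
\frac{1}{|A\cup B|} = \int_0^1 s^{|A\cup B|-1}\,\d s ,
\end{align}
valid for nonempty union. We also have $s^{|A\cup B|} = s^{|A|+|B|-|A\cap B|}$, which is awkward, so instead I would use the probabilistic min-hash representation.

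Let $\pi$ be a uniformly random total order on $[\nalt]$. For nonempty $A,B$, the event that the $\pi$-minimal element of $A\cup B$ lies in $A\cap B$ has probability $|A\cap B|/|A\cup B|$. This event equals the event that $\min_\pi A=\min_\pi B$. Hence
\begin{align}
J(A,B)=\sum_{\pi}\frac{1}{\nalt!}\sum_{\alt\in[\nalt]} \mathbb 1[\min_\pi A=\alt]\,\mathbb 1[\min_\pi B=\alt] .
\end{align}
Each inner term is an inner product of the feature vectors $\phi_\pi(A)=(\mathbb 1[\min_\pi A=\alt])_{\alt}$. So $J$ is a nonnegative combination of positive definite kernels and is therefore positive definite. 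Symmetry is clear.

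I expect the only delicate points to be verifying the min-hash identity and handling the empty-set convention. The identity follows because the $\pi$-minimum of $A\cup B$ is uniform on $A\cup B$. Empty sets do not occur under the paper's ranking convention; in general one can assign $\emptyset$ a dummy minimum value distinct from every $\alt$, with $J(\emptyset,\emptyset)=1$.
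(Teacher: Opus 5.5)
Your proof is correct, and its first step matches the paper's: both reduce to the Jaccard similarity on subsets of $[\nalt]$ and pull back along $r \mapsto r_{[t]}$, which preserves positive definiteness. The two proofs differ in how they handle the key lemma. The paper simply cites \citet{bouchard2013jaccard} for the positive definiteness of the Jaccard coefficient on sets. You prove it directly through the min-hash identity $J(A,B) = \P_\pi(\min_\pi A = \min_\pi B)$. This writes $J$ as a uniform average, over permutations $\pi$, of inner products of explicit indicator feature maps $\phi_\pi$, so it is a nonnegative combination of positive semidefinite kernels.

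Your route gives a self-contained, elementary argument with an explicit finite-dimensional feature map. Your dummy-coordinate device also fixes the value $0/0$ when a top tier is empty, with $J(\emptyset,\emptyset)=1$ and $J(\emptyset,A)=0$ for nonempty $A$. This makes the argument independent of whether rank $1$ is always attained, a case the paper's ratio formula leaves unaddressed. The paper's route buys brevity.

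Two small clean-ups. The abandoned integral identity $1/|A\cup B| = \int_0^1 s^{|A\cup B|-1}\,\d s$ plays no role and can be dropped. The sentence claiming $r_{[t]}$ ``may be empty'' while also asserting that ranks start at $1$ should be reworded so the two cases are stated separately.
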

\begin{proof}
    It is already known that the Jaccard coefficient is a kernel between sets, see e.g. \citet{bouchard2013jaccard}.
    As $\kernel[j]$ is the Jaccard coefficient applied to the top-$t$ tiers of two rankings, it is also positive definite.
\end{proof}

\begin{lemmaA}[The Mallows kernel is positive definite]
    Let, for $\nu \in \R^+$,
    \begin{equation}
    \label{app:eq:mallows_kernel}
        \kernel[m][\nu](\ra_1, \ra_2) = e^{-\nu n_d} \quad \in \lrb{e^{- \nu \binom \nalt2}, 1}, 
    \end{equation}
    where $n_d \coloneqq \tfrac14 \sum_{\alt, \alt' \in [\nalt]} \lra{s_1(\alt, \alt') - s_2(\alt, \alt')}$ is the number of discordant pairs and
    $s_l(\alt, \alt') \coloneqq \sign\lrp{\ra_l(\alt)-\ra_l(\alt')}$ with $\sign(0) = 0$.
    Then, $\kernel[m]$ is positive definite.
\end{lemmaA}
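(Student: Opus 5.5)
The plan mirrors the Borda proof: exhibit an isometric embedding of $(\Ra, n_d)$ into a Hilbert space with squared-norm structure, so that $n_d$ is a squared Euclidean distance. Then $-n_d$ is conditionally positive definite, and Schoenberg's theorem \citep{schoenberg1938metric, schoelkopf2000kernel} gives positive definiteness of $e^{-\nu n_d}$ for every $\nu>0$.

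\textbf{Feature map.} Map each ranking $\ra$ to the vector
\[
\Phi(\ra) \coloneqq \bigl(s(\alt,\alt')\bigr)_{\alt<\alt'} \in \{-1,0,1\}^{\binom{\nalt}{2}},
\qquad s(\alt,\alt') = \sign\bigl(\ra(\alt)-\ra(\alt')\bigr).
\]
By antisymmetry of $s$, the sum over ordered pairs in $n_d$ is twice the sum over unordered pairs, so
\[
n_d = \tfrac12 \sum_{\alt<\alt'} \lra{s_1(\alt,\alt') - s_2(\alt,\alt')},
\]
which is $\tfrac12\lrn{\Phi(\ra_1)-\Phi(\ra_2)}_1$. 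This is an $\ell^1$ distance, not a squared $\ell^2$ one, which is the main obstacle.

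\textbf{Reduction to the $\ell^2$ case.} I would exploit that coordinates lie in $\{-1,0,1\}$. For each coordinate, encode $v\in\{-1,0,1\}$ as the thermometer vector $\psi(v) = (\mathbb 1[v\ge 0], \mathbb 1[v\ge 1]) \in \{0,1\}^2$. A direct check on the nine cases gives $\lra{v-w} = \lrn{\psi(v)-\psi(w)}_2^2$. Concatenating, we obtain $\Psi:\Ra\to\{0,1\}^{2\binom{\nalt}{2}}$ with
\[
n_d(\ra_1,\ra_2) = \tfrac12 \lrn{\Psi(\ra_1)-\Psi(\ra_2)}_2^2 .
\]
Hence $-n_d$ is conditionally positive definite: for $\sum_i c_i = 0$, the quantity $-\sum_{i,j} c_i c_j n_d(\ra_i,\ra_j)$ equals $\lrn{\sum_i c_i\Psi(\ra_i)}_2^2 \ge 0$.

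\textbf{Conclusion.} Schoenberg's result then yields that $\kernel[m][\nu] = e^{-\nu n_d}$ is positive definite for all $\nu>0$. Alternatively, this can be seen directly: $e^{-\nu\lrn{x-y}^2/2}$ is a Gaussian kernel on $\R^{2\binom{\nalt}{2}}$, and the Mallows kernel is its restriction to the image of $\Psi$. Restrictions of positive definite kernels remain positive definite, so no isometry subtleties arise, and injectivity of $\Psi$ is not needed.
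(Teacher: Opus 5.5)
Your proof is correct and is essentially the paper's argument. The paper maps each ranking to its adjacency matrix $\Phi(r)=\bigl(\ind(r(i)\le r(j))\bigr)_{i,j}$; for each unordered pair $\{a,a'\}$, the two entries of this matrix coincide with your code $\psi(s(a,a'))$ after complementing one bit, which gives $2n_d=\lrn{\Phi(r_1)-\Phi(r_2)}_1=\lrn{\Phi(r_1)-\Phi(r_2)}_2^2$ and identifies $k_{\mathrm m}^{\nu}$ as a Gaussian kernel with $\gamma=\nu/2$ composed with $\Phi$, exactly your ``alternative'' conclusion (your primary Schoenberg route via conditional positive definiteness of $-n_d$ is equally valid and is how the paper treats the Borda kernel).
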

\begin{proof}
    It is known that the number of discordant pairs $n_d$ is a distance on $\Ra$~\citep{kemeny1962mathematical}.
    Consider now the mapping of a ranking into its adjacency matrix,
    \begin{align}
        \Phi: \Ra &\to \lrc{0,1}^{\nalt\times\nalt} \\
                r &\mapsto \lrp{\ind\lrp{r(i) \leq r(j)}}_{i, j=1}^\nalt,
    \end{align}
    where $\ind$ is the indicator function.
    Then, 
    \[
        2n_d = \lrn{\Phi(r_1)-\Phi(r_2)}_1 = \lrn{\Phi(r_1)-\Phi(r_2)}_2^2 
    \] 
    where $\lrn{\cdot}_p$ indicates the entry-wise matrix $p$-norm and the equality holds because the entries of the matrices are either 0 or 1.
    Hence, $\kernel[m]^\nu$ is a Gaussian (RBF) kernel on $\R^{\nalt\times \nalt}$ with parameter $\gamma = \nu/2$, evaluated at $(\Phi(r_1),\Phi(r_2))$. 
    As the Gaussian kernel is positive definite and positive definiteness is preserved under composition with any map, $\kernel[m]$ is itself positive definite. 
\end{proof}

\subsection{Kernels for Euclidean Spaces}
\label{app:kernels:vector}

In the experiments of \cref{sec:experiments}, we also consider Euclidean results and the radial basis function (RBF; Gaussian) kernel~\citep{schoelkopf2022kernels},
using the scikit-learn implementation~\citep{scikit-learn}.
The \textit{RBF kernel} $\kernel[RBF]$ is suitable for research questions of the form \emph{``Are the performances consistent?''}
For $\gamma \in \R^+$ and $\b x_1, \b x_2 \in \R^\nalt$,
\begin{equation}
\label{app:eq:rbf_kernel}
    \kernel[RBF]^\gamma(\b x_1, \b x_2) = e^{-\gamma \lrn{\b x_1 - \b x_2}^2}, 
\end{equation}
where $\lrn{\cdot}$ is the Euclidean norm. 
Notice that, for results in $[0,1]^{n_a}$, such as the scores in our case studies, $\kernel[RBF]^\gamma$ satisfies \cref{ass:interpretability} with $d(x_1,x_2) = \|x_1-x_2\|^2/n_a$ and $g(u) = e^{-\gamma n_a u}$; with $\gamma = 1/n_a$, as in our experiments, $g(u) = e^{-u}$.

\section{Notes on resampling}
\label{app:resampling}

This section empirically evaluates different estimators and resampling schemes to determine the minimum number of experiments needed to achieve valid results.

\subsection{Resampling schemes}
\label{app:resampling:schemes}

\begin{figure*}[t!]
    \begin{subfigure}[t]{0.5\textwidth}
        \centering
        \includegraphics[width=0.95\linewidth]{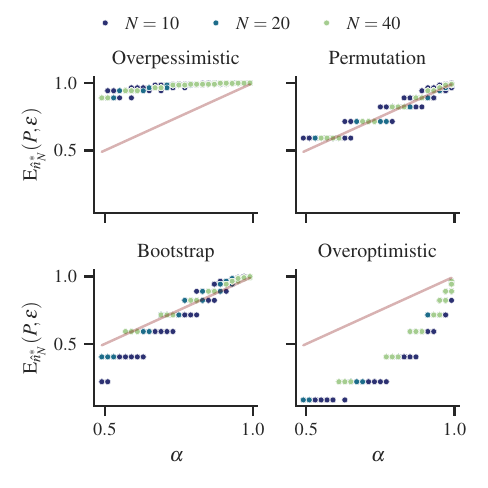}
        \caption{Comparison plots}
    \end{subfigure}
    ~
    \begin{subfigure}[t]{0.5\textwidth}
        \centering
        \includegraphics[width=0.95\linewidth]{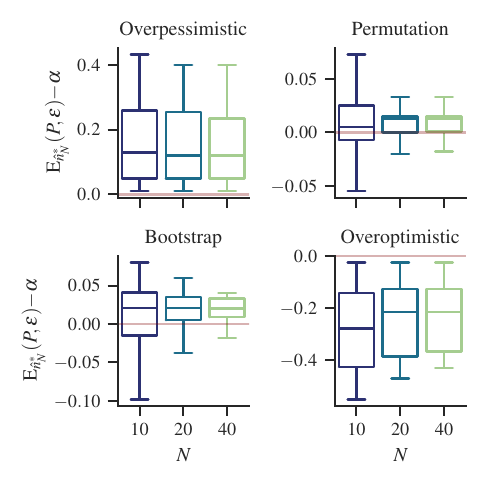}
        \caption{Boxplots}
    \end{subfigure}
    \caption{KEV at the estimated $\hat n^*_N$ compared to the desired level $\alpha$. 
    The cautiously pessimistic scenario corresponds to a slightly positive difference.
    $P=\m U(\m R_8)$ and we use the Mallows kernel.}
    \label{fig:mmdresampling}
\end{figure*}

In this section, we expand on \cref{sec:experiments:estimator} and provide additional details on the design of our implementation.
Estimating $n^*$ requires resampling pairs of samples $\b {x}, \b {x}'$ of size $n$ from the $N$ available experimental results $\b {y}$, with distribution $P_N$.
\cref{fig:mmdresampling} expands \cref{fig:alpha_vs_evnstar} and compares four ways of doing so, defined by two flags:
\texttt{disjoint} controls whether $\b {x}$ and $\b {x}'$ are drawn from two disjoint halves of $\b {y}$, so that no experiment appears in both (we still allow the same value to appear in both);
\texttt{replace} determines whether $\b {x}$ and $\b {x}'$ are drawn with replacement.

Notice that, by \eqref{eq:mmd_empirical}, $\MMD_n^2$ is the sum of two within-sample terms minus a between-sample term: collisions within a sample bias $\MMD_n$ upwards; collisions between the two samples bias it downwards.
The scheme with \texttt{disjoint} = False and \texttt{replace} = True is the one covered by \cref{lemma:bootstrap_consistency}.

\begin{table}[thb]
    \centering
    \caption{Resampling schemes and their expected effect on the estimated KEV.}
    \label{tab:mmdresampling}
    \begin{tabular}{llccl}
    \toprule
    \texttt{disjoint} & \texttt{replace} & Collisions within & Collisions between & Scheme \\
    \midrule
    True  & True  & yes & no  & overpessimistic \\
    True  & False & no  & no  & permutation \\
    False & False & no  & yes & overoptimistic \\
    False & True  & yes & yes & bootstrap \\
    \bottomrule
    \end{tabular}
\end{table}

\subsection{Incremental vs. Fresh Sampling}
\label{app:resampling:incremental}

\cref{alg:studyexecution} builds the preliminary study incrementally in batches of $N_0$ experiments, appending the new results to the existing ones, which is the practically relevant scenario.
The results of \cref{sec:externalvalidity:sizeofstudy}, instead, correspond to fresh sampling: at every step, a new sample $\b {y} \sim P^{N}$ is drawn independently of the previous ones.
As the schemes differ in the joint law of the sequence of samples, we empirically compare the behavior of the KEV for the two case studies of \cref{sec:experiments}.

\cref{fig:mmdresampling_fresh_incremental} shows that the curves nearly coincide for every kernel; the only visible deviation, for $k_\mathrm{j}$ in the LLM case study at small $n$, occurs where the KEV is far below $\alpha$ and hence does not affect $\hat{n}^*_N$.
In practice, the incremental scheme can therefore be used in place of the fresh one.

\begin{figure*}[thb]
    \begin{subfigure}[t]{0.5\textwidth}
        \centering
        \includegraphics[width=0.95\linewidth]{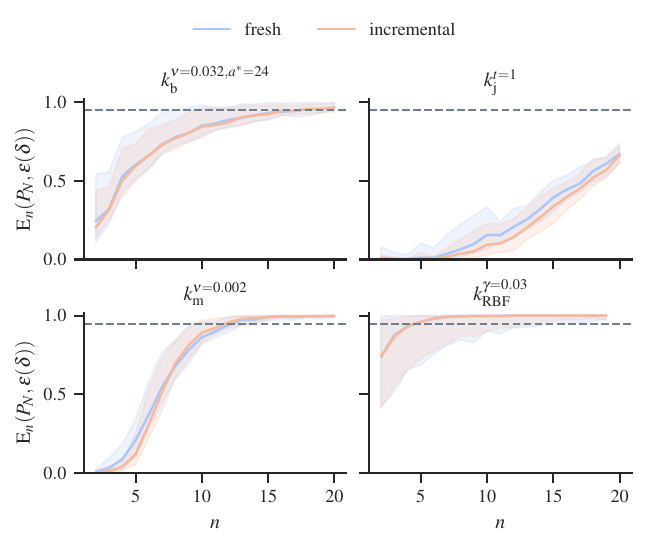}
        \caption{Encoders}
    \end{subfigure}
    ~
    \begin{subfigure}[t]{0.5\textwidth}
        \centering
        \includegraphics[width=0.95\linewidth]{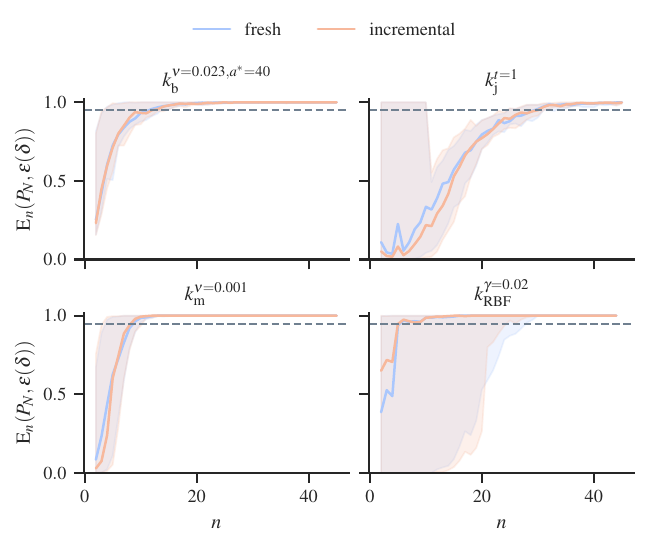}
        \caption{LLMs}
    \end{subfigure}
    \caption{KEV compared for incremental and fresh sampling schemes.}
    \label{fig:mmdresampling_fresh_incremental}
\end{figure*}

\section{Proofs}
\label{app:proofs}

We first state and prove auxiliary results not in the main text.

\subsection{Auxiliary Results}
\label{app:proofs:additionalresults}

\begin{lemmaA}
\label{lemma:ksupinterpretable}
    Let $k = g \circ d$ as in \cref{ass:interpretability}. Then, $\kernel(x,x) \equiv 1$.
\end{lemmaA}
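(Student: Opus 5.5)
The claim follows by direct evaluation of $k = g\circ d$ on the diagonal. Fix any $x \in \m X$. By \cref{ass:interpretability}, $d$ is a normalized dissimilarity, so by the definition in the footnote (symmetric with $d(x,x)\equiv 0$) we have $d(x,x) = 0$. Hence
\[
k(x,x) = g\bigl(d(x,x)\bigr) = g(0) = 1,
\]
where the last equality is the normalization $g(0)=1$ required in \cref{ass:interpretability}. Since $x$ was arbitrary, $k(x,x)\equiv 1$, i.e.\ $k$ has constant diagonal and is normalized in the sense of the notation section.

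It is worth recording two consequences for later use. Because $g$ is monotonically decreasing with values in $[0,1]$ and $d$ takes values in $[0,1]$, we also get $0 \le g(1) \le k(x,x') \le g(0) = 1$ for all $x,x'$. Thus $\sup_x \sqrt{k(x,x)} = 1$ and $\sup_{x,x'}|k(x,x')| \le 1$; the kernel is therefore bounded, and the mean embeddings used in \cref{lemma:epsfromdelta} and \cref{lemma:KEVmarkovlowerbound} are well defined.

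There is no real obstacle here. The only point needing care is that the argument relies on $d(x,x)=0$ coming from the definition of dissimilarity, not from any metric property, and on $g(0)=1$ being part of the assumption; convexity is not needed for this lemma.
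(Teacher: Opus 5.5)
Your proof is correct and is exactly the paper's argument: $k(x,x) = g(d(x,x)) = g(0) = 1$, using $d(x,x)=0$ from the definition of a dissimilarity and $g(0)=1$ from \cref{ass:interpretability}. Your extra remark that $g(1) \le k(x,x') \le 1$, and hence that $k$ is bounded, is also correct but is not needed for the statement.
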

\begin{proof}
    \begin{align}
        \kernel(x, x) = g(d(x, x)) = g(0) = 1.
    \end{align}
\end{proof}

\begin{lemmaA}[$\E \MMD^2_n$]
\label{lemma:EMMDn}
    Let $\MMD_n$ be as in \eqref{eq:mmdn_def}, and let $k$ be a kernel with constant diagonal $k(x, x) \equiv \ksup$.
    Then, for any $n \geq 1$,
    \begin{align}
        \E_{P^{2n}}\MMD_n^2 = \frac2n \lrp{\ksup - \bar k_P},
    \end{align}
    where $\bar k_P \coloneqq \E_{P \otimes P} k(X, X')$.
\end{lemmaA}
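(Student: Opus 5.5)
The approach is to expand the squared empirical MMD via \eqref{eq:mmd_empirical} with $m=n$ and take expectations term by term, using that $\b X, \b X' \sim P^n$ are independent with i.i.d.\ entries. The kernel is bounded, so every expectation below is finite and linearity applies without further justification.

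\textbf{Within-sample terms.} For the first term, $\frac1{n^2}\sum_{i,j} k(X_i,X_j)$, I split the double sum into diagonal and off-diagonal pairs.
\begin{itemize}
\item The $n$ diagonal pairs $i=j$ each contribute $k(X_i,X_i)=\ksup$, by the constant-diagonal assumption.
\item For each of the $n(n-1)$ off-diagonal pairs, $X_i$ and $X_j$ are independent with law $P$, so each has expectation $\bar k_P$.
\end{itemize}
Hence the first term has expectation $\frac1n\ksup + \frac{n-1}{n}\bar k_P$. The second term, built from $\b X'$, has the same expectation by identical reasoning.

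\textbf{Cross term.} In $-\frac{2}{n^2}\sum_{i,j} k(X_i,X'_j)$, every pair $(X_i,X'_j)$ is independent with law $P\otimes P$, because $\b X$ and $\b X'$ are independent. All $n^2$ summands therefore have expectation $\bar k_P$, and the cross term has expectation $-2\bar k_P$.

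\textbf{Combining.} Summing the three contributions gives
\begin{align}
\E_{P^{2n}}\MMD_n^2 = \frac2n\ksup + \frac{2(n-1)}{n}\bar k_P - 2\bar k_P = \frac2n\lrp{\ksup-\bar k_P},
\end{align}
which is the claimed identity.

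The only delicate point is bookkeeping the diagonal terms: the constant-diagonal hypothesis is exactly what makes those $2n$ terms deterministic. Without it, they would contribute $\E_P k(X,X)$ in place of $\ksup$. Measurability of $\MMD_n$ under \cref{ass:well-defined-KEV} is already noted in the text, so no further technical obstacle arises.
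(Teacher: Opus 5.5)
Your proof is correct and is essentially the paper's argument. The paper writes $\E\MMD_n^2$ as $\E\langle \mu_{\hat P_{\b X}}, \mu_{\hat P_{\b X}}\rangle + \E\langle \mu_{\hat P_{\b X'}}, \mu_{\hat P_{\b X'}}\rangle - 2\E\langle \mu_{\hat P_{\b X}}, \mu_{\hat P_{\b X'}}\rangle$ and uses the reproducing property to obtain the double sums of \eqref{eq:mmd_empirical}; it then splits diagonal from off-diagonal terms exactly as you do, and starting directly from \eqref{eq:mmd_empirical} only skips that first line.
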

\begin{proof}
    \begin{align}
    \label{eq:app:EMMDn_expanded}
        \E_{P^n  \otimes P^n}\MMD_n^2 
            &= \E_{P^n} \lrang{\mu_{\hat P_{\b X}}, \mu_{\hat P_{\b X}}}_{\m H} + \E_{P^n} \lrang{\mu_{\hat P_{\b X'}}, \mu_{\hat P_{\b X'}}}_{\m H} - 2\E_{P^n  \otimes P^n} \lrang{\mu_{\hat P_{\b X}}, \mu_{\hat P_{\b X'}}}_{\m H}
    \end{align}
    Now, by linearity of the mean embedding, $\mu_{\hat P_{\b X}} = \tfrac1n \sum\limits_{i=1}^n \kernel(X_i, \cdot)$ and, by $\kernel$ being the reproducing kernel for $\m H$, 
    \begin{align}
        \lrang{\mu_{\hat P_{\b X}}, \mu_{\hat P_{\b X}}}_{\m H} 
            &= \frac1{n^2} \sum\limits_{i, j = 1}^n \kernel(X_i, X_j).
    \end{align}
    Taking expectations,
    \begin{align}
        \E_{P^n} \lrang{\mu_{\hat P_{\b X}}, \mu_{\hat P_{\b X}}}_{\m H}
            &= \frac1{n^2} \sum\limits_{i, j = 1}^n \E_{P\otimes P} \kernel(X_i, X_j) \\
            &\overset{(a)}{=} \frac1{n^2} \sum\limits_{i = 1}^n \E_{P\otimes P} \kernel(X_i, X_i) +  \frac1{n^2} \sum\limits_{i\neq j = 1}^n \E_{P\otimes P} \kernel(X_i, X_j)\\
            &\overset{(b)}{=} \frac{1}{n} \ksup +  \frac{n-1}{n}\bar k_P,   
    \end{align}
    where $(a)$ is because the $X_i$-s are i.i.d. $\sim P$ and $(b)$ is because $\kernel(x, x) \equiv \ksup$.
    Similarly, we obtain
    \begin{align}
        & \E_{P^n} \lrang{\mu_{\hat P_{\b X'}}, \mu_{\hat P_{\b X'}}}_{\m H} = \frac{1}{n} \ksup +  \frac{n-1}{n}\bar k_P
        & \E_{P^n  \otimes P^n} \lrang{\mu_{\hat P_{\b X}}, \mu_{\hat P_{\b X'}}}_{\m H} = \bar k_P 
    \end{align}
    By substituting into \eqref{eq:app:EMMDn_expanded}, we get
    \begin{align}
        \E_{P^n \otimes P^n}\MMD_n^2
            &= \frac2n \lrp{\ksup - \bar k_P}.
    \end{align}
\end{proof}

For completeness, we include the following result on the limiting distribution of $n\MMD_n^2$, i.e., of the V-statistic estimator of $\MMD(P, Q)$ under the null hypothesis that $P=Q$---$P$ and $Q$ being probability distributions.
It is mainly based on \citet{gretton2012kernel}, which discusses the U-statistic case.
\begin{theoremA}[Limiting distribution of $\MMD_n$]
\label{thm:mmd_limit}
    Let $\kernel$ be a bounded reproducing kernel for the Hilbert space $\m H$ and define $\MMD_n$ as in \eqref{eq:mmdn_def}.
    Let $\lrc{\lambda_i}_i$ be the eigenvalues of the operator
    \begin{equation}
    \label{app:eq:MMDn_proof_A}
        A: g \mapsto \int_{\m X} \tilde\kernel(x, \cdot)g(x) dP(x),
    \end{equation}
    where $\tilde \kernel(x, x') \coloneqq \lrang{\kernel(\cdot, x) - \mu_P, \kernel(\cdot, x') - \mu_P}_{\m H}$ is the centered kernel.
    Then, 
    \begin{equation}
    \label{app:eq:mmd_limit}
        n \MMD_n^2 \xrightarrow[]{d} 2 \sum\limits_{i=1}^\infty \lambda_i Z_i^2,
    \end{equation}
    where the $Z_i$-s are i.i.d. $\m N(0, 1)$.
\end{theoremA}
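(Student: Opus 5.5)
We prove \cref{thm:mmd_limit} by rewriting $\MMD_n^2$ as a quadratic form in the centered feature maps and invoking the classical limit theory for degenerate V-statistics, in the spirit of \citet{gretton2012kernel}. Set $\phi(x) \coloneqq \kernel(\cdot,x) - \mu_P \in \m H$. Since $\mu_P$ cancels in $\mu_{\hat P_{\b X}} - \mu_{\hat P_{\b X'}}$, \eqref{eq:mmd_empirical} becomes
\begin{align}
    n\MMD_n^2 = \lrn{\tfrac{1}{\sqrt n}\textstyle\sum_{i=1}^n \phi(X_i) - \tfrac{1}{\sqrt n}\sum_{i=1}^n \phi(X'_i)}_{\m H}^2 = \lrn{\tfrac{1}{\sqrt n}\textstyle\sum_{i=1}^n \lrp{\phi(X_i)-\phi(X'_i)}}_{\m H}^2 .
\end{align}
This is exactly the V-statistic form noted after \eqref{eq:mmd_empirical}, and it makes the degeneracy explicit: $\tilde\kernel(x,x') = \lrang{\phi(x),\phi(x')}_{\m H}$, and $\E_P\phi(X)=0$.

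\textbf{Step 1 (Hilbert-space CLT).} The vectors $W_i \coloneqq \phi(X_i)-\phi(X'_i)$ are i.i.d.\ in the separable Hilbert space $\m H$ (separability follows from \cref{remark:well-defined-KEV}(b)). They have mean zero, and since $\kernel$ is bounded, $\lrn{W_i}\le 4\sup_x\sqrt{\kernel(x,x)}$, so $\E\lrn{W_i}^2<\infty$. The central limit theorem in separable Hilbert spaces then gives $n^{-1/2}\sum_i W_i \xrightarrow{d} G$, a centered Gaussian element of $\m H$ with covariance operator $2C$, where $C \coloneqq \E_P[\phi(X)\otimes\phi(X)]$. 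The factor $2$ comes from the independence of $X_i$ and $X'_i$.

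\textbf{Step 2 (continuous mapping).} The map $h\mapsto\lrn{h}^2$ is continuous, so $n\MMD_n^2\xrightarrow{d}\lrn{G}^2$. The operator $C$ is positive, self-adjoint, and trace class, since $\mathrm{tr}\, C = \E\tilde\kernel(X,X)<\infty$. Diagonalize it as $C=\sum_i\lambda_i e_i\otimes e_i$ with orthonormal $e_i$. Then $\lrang{G,e_i} = \sqrt{2\lambda_i}Z_i$, where the $Z_i$ are i.i.d.\ $\m N(0,1)$. They are uncorrelated jointly Gaussian variables, hence independent. It follows that $\lrn{G}^2 = 2\sum_i\lambda_i Z_i^2$, which converges almost surely because $\sum_i\lambda_i<\infty$.

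\textbf{Step 3 (identifying the eigenvalues).} It remains to show that the nonzero eigenvalues of $C$ on $\m H$ coincide with those of $A$ in \eqref{app:eq:MMDn_proof_A} on $L^2(P)$. Write $C=S^*S$ with $S:\m H\to L^2(P)$, $h\mapsto\lrang{h,\phi(\cdot)}$. Then $SS^*$ is the integral operator with kernel $\tilde\kernel$, which is $A$. Because $S^*S$ and $SS^*$ share their nonzero spectrum with multiplicities, the claim \eqref{app:eq:mmd_limit} follows. Zero eigenvalues contribute nothing to the sum, so they can be ignored.

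The main obstacle is Step 1: the infinite-dimensional CLT and the covariance identification. If one prefers to avoid the Hilbert-space CLT, the fallback is a truncation argument. Expand $\tilde\kernel=\sum_i\lambda_i\psi_i(x)\psi_i(x')$ by Mercer's theorem in $L^2(P)$. Apply the finite-dimensional CLT to the first $K$ coordinates. Then control the tail in $L^2$ using $\E\lrn{n^{-1/2}\sum_i W_i}^2 = 2\sum_i\lambda_i$, exactly as in \cref{lemma:EMMDn}, which makes the remainder uniformly small in $n$ as $K\to\infty$.
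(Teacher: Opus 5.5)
Your proof is correct, but it takes a different route from the paper.

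\textbf{The paper's route.} It centers the kernel and splits $n\MMD_n^2$ into its three addends. The two within-sample terms are treated as degenerate V-statistics via \citet[Section~6.4.1, Theorem~B]{serfling1980approximation}, with limits $\sum_i\lambda_i a_i^2$ and $\sum_i\lambda_i b_i^2$. The cross term is handled by eq.~(28) of \citet{gretton2012kernel}, with limit $\sum_i\lambda_i a_i b_i$. The pieces are then recombined into $\sum_i\lambda_i(a_i-b_i)^2$, which has the law of $2\sum_i\lambda_i Z_i^2$.

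\textbf{Your route.} You write $n\MMD_n^2$ as the squared norm of a single normalized sum of i.i.d.\ centered $\m H$-valued vectors. One Hilbert-space CLT plus the continuous mapping theorem then does all the work. The factor $2$ enters through the covariance $2C$, and the spectrum is transferred to $A$ by comparing $S^*S$ with $SS^*$.

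\textbf{What each buys.} The paper's recombination step tacitly needs the three addends to converge \emph{jointly}, with the same $a_i$ in the first and third limits and the same $b_i$ in the second and third. As the paper applies them, the cited results deliver each limit separately and leave joint convergence implicit. In your argument joint convergence is automatic, because the whole statistic is a continuous function of one sum. You also avoid three things the paper needs: the $\m X=\R$ caveat in Serfling's statement, the non-constancy condition $\mathrm{var}_{P\otimes P}\,\tilde k(X,X')>0$ (if all $\lambda_i=0$, then $G=0$ and the claim is trivial), and any eigenfunction expansion of $\tilde k$ in $L^2(P)$. The price is invoking the CLT in separable Hilbert spaces and checking that $x\mapsto k(\cdot,x)$ is a measurable $\m H$-valued map; both are fine under the paper's standing assumptions. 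The paper's route, in turn, stays within the U/V-statistic framework it reuses in \cref{lemma:bootstrap_consistency}, where $\MMD_n^2$ is recast as a one-sample degenerate V-statistic and \citet{arcones1992bootstrap} is applied.
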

\begin{proof}
    Our proof follows \citet[Theorem 12]{gretton2012kernel}  and \citet[Section 6.4.1 Theorem B]{serfling1980approximation}.
    Notice that \citet[Section 6.4.1 Theorem B]{serfling1980approximation} is stated for $\m X = \R$.
    However, it is directly derived from the U-statistic equivalent in Chapter 5, which is stated for more general spaces. 
    
    First, we recall that the $\MMD$ does not change when the kernel is centered~\citep[eq. (9)]{gretton2012kernel}.
    Explicitly,
    \begin{align}
    \label{app:eq:mmd_empirical_centered}
        \MMD_n^2 = \MMD(\hat P_\b X, \hat P_{\b X'})^2 
            = \frac{1}{n^2} \sum\limits_{i, j=1}^n \tilde k(X_i, X_j) + \frac{1}{n^2} \sum\limits_{i, j=1}^n \tilde k(X'_i, X'_j)
            - \frac{2}{n^2} \sum\limits_{i, j=1}^n \tilde k(X_i, X'_j),
    \end{align}
    We will now consider the limiting properties of each addend of \eqref{app:eq:mmd_empirical_centered} separately. 

    \textbf{First Two Addends}\;
    Let $V_n \coloneqq \frac1{n^2}\sum\limits_{i,j=1}^n \tilde\kernel\lrp{X_i, X_j}$. 
    In particular, using Serfling's notation, we have: 
    \begin{enumerate}[label=(\roman*)]
        \item A functional $T(P) = \E_{P \otimes P} \tilde\kernel(X, X') \equiv 0$, with estimator $T(\hat P_n) = V_n$.
        \item A symmetric kernel $h = \tilde\kernel$.
        \item A remainder term $R_n = T(\hat P_n) - T(P) - V_n \equiv 0$
    \end{enumerate}
    Condition $A_2$~\citep[p. 225]{serfling1980approximation} is satisfied:
    \begin{enumerate}[label=(\roman*)]
        \item $\mathrm{var}_P(\tilde\kernel(\cdot, X)) = 0$ by the properties of the centered kernel.
        \item[(i)'] $\mathrm{var}_{P \otimes P}(\tilde\kernel(X, X')) > 0$ if the kernel is not constant. 
        \item[(ii)] $nR_n \equiv 0$. 
    \end{enumerate}
    Then, by boundedness of $\tilde \kernel$ and \citet[Section 6.4.1 Theorem B]{serfling1980approximation},
    \begin{equation}
    \label{app:eq:mmd_1st_limit}
        n V_n \xrightarrow[]{d} \sum\limits_{i=1}^{\infty}\lambda_i a_i^2,
    \end{equation}
    where $a_i\iid\mathcal N(0, 1)$. 
    Similar reasoning shows that the second term converges to $\sum_{i=1}^{\infty}\lambda_i b_i^2$, where $b_i\iid\m N(0, 1)$ and are independent of the $a_i$-s. 

    \textbf{Third Addend}\;
    \citet{gretton2012kernel} eq. (28) shows that
    \begin{equation}
    \label{app:eq:mmd_3rd_limit}
        \frac1n\sum\limits_{i,j=1}^n{\tilde\kernel}\lrp{X_i, X'_j} \xrightarrow[]{d}  \sum\limits_{i=1}^{\infty}\lambda_i a_i b_i.
    \end{equation}

    Recombining \eqref{app:eq:mmd_1st_limit} and \eqref{app:eq:mmd_3rd_limit} and using basic properties of normal and chi-squares, we obtain 
    \begin{equation}
        n \MMD_n^2 \xrightarrow[]{d} \sum\limits_{i=1}^{\infty}\lambda_i (a_i^2 + b_i^2 - 2 a_i b_i) 
        \sim 2\sum\limits_{i=1}^{\infty}\lambda_i Z^2_i,
    \end{equation}
    where $Z_i \iid \mathcal N(0, 1)$, thus concluding the proof. 
\end{proof}

\subsection{Proof of \texorpdfstring{\cref{lemma:KEVmonotonicity}}{Lemma 5}}
\label{app:proofs:KEVmonotonicity}

\begin{proof}
    As $k$ is bounded, $\MMD_m\in[0,M]$ with $M \coloneqq 2\sup_x\sqrt{k(x,x)}$. 
    Therefore, as $\ev_m(P,\cdot) = F_{\MMD_m}$, by the layer cake representation~\citep[Section 1.13]{lieb2001analysis} we have   
    \begin{align}
        \E_{P^{2m}}\MMD_m = \int_0^M\lrp{1-\ev_m(P,y)}\d y.
    \end{align} 
    Hence, we have that
    \begin{align}
      \int_{\R}\lrp{\ev_{n+1}(P,y)-\ev_n(P,y)}\d y = \E_{P^{2n}}\MMD_n-\E_{P^{2n+2}}\MMD_{n+1},
    \end{align}
    and it suffices to show that $\E_{P^{2n+2}}\MMD_{n+1}\leq\E_{P^{2n}}\MMD_n$.
    
    Consider now $(X_i,X'_i)_{i=1}^{n+1} \iid P\otimes P$ and define, for every $i\in[n+1]$, $Y_i \coloneqq k(\cdot,X_i)-k(\cdot,X'_i) \in \m H_k$, so that 
    \begin{align}
    \label{eq:auxiliary_in_proof_KEVmonotonicity}
        \MMD_{n+1}\overset{d}{=}\lrn{\frac{1}{n+1}\sum_{i=1}^{n+1}Y_i}_{\m{H}_k}, 
    \end{align}
    where $\overset{d}{=}$ denotes equality in distribution.
    Now, the leave-one-out means $U_j \coloneqq \frac1n \sum_{i \neq j} Y_i$ satisfy
    \begin{align}
        &\lrn{U_j}_{\m H_k}\overset{d}{=}\MMD_n \quad \forall j \in [n+1]
        &\mathrm{and}&
        &\frac{1}{n+1}\sum_{i=1}^{n+1}Y_i = \frac{1}{n+1}\sum_{j=1}^{n+1}U_j,
    \end{align}
    as each $Y_i$ appears in exactly $n$ of the $U_j$-s.
    Finally, by \eqref{eq:auxiliary_in_proof_KEVmonotonicity}, the triangle inequality, and taking expectations, we have
    \begin{align}
        \E_{P^{2n + 2}} \MMD_{n+1} \leq \frac{1}{n+1}\sum_{j=1}^{n+1}\E_{P^{2n}}\lrn{U_j}_{\m{H}_k} = \E_{P^{2n}}\MMD_n
    \end{align}
\end{proof}

\subsection{Proof of \texorpdfstring{\autoref{lemma:epsfromdelta}}{Lemma 9}}
\label{app:proofs:epsfromdelta}

\begin{proof}
    (First implication).
    \begin{align}
        \E_{P \otimes P} d(X, X') \leq \delta 
            \quad\xRightarrow{(a)}\quad g\lrp{\E_{P \otimes P} d(X, X')} \geq g(\delta)
            \quad\xRightarrow{(b)}\quad \E_{P \otimes P} \kernel(X, X') \geq g(\delta).
    \end{align}
    where $(a)$ is due to monotonicity of $g$ and $(b)$ is due to the convexity of $g$, Jensen's inequality, and \cref{ass:interpretability}.

    (Second implication).
    \begin{align}
        \E_{P^{2n}}\MMD_n 
            \ \overset{(a)}{\leq} \ \sqrt{\E_{P^{2n}}\MMD_n^2} 
            \ \overset{(b)}{=} \ \sqrt{\frac2n \lrp{1 - \bar k_P}}
            \ \overset{(c)}{\leq}\ \sqrt{\frac2n \lrp{1 - g(\delta)}}
            \ \overset{(d)}{\leq}\ \sqrt{2\lrp{1 - g(\delta)}},
    \end{align}
    where $(a)$ is by Jensen's inequality, $(b)$ is from \autoref{lemma:EMMDn} and \autoref{lemma:ksupinterpretable}, $(c)$ is from hypothesis, and $(d)$ is from $n \geq 1$.
\end{proof}
    
\subsection{Proof of \texorpdfstring{\cref{lemma:KEVmarkovlowerbound}}{Lemma 10}}
\label{app:proofs:KEVmarkovlowerbound}

\begin{proof}
    Applying Markov's inequality to $\MMD_n^2$ and then \cref{lemma:EMMDn}, we obtain the following lower bound on the external validity achieved:
    \begin{align}
        \ev_n^{\kernel}(P, \eps) &= P^{2n} \lrp{\MMD_n \leq \eps} 
            \geq 1 - \frac{1}{\eps^2} \E_{P^{2n}} \MMD_n^2
                = 1- \frac{2(1- \bar \kernel_P)}{\eps^2 n} 
    \end{align}
\end{proof}

\subsection{Proof of \texorpdfstring{\cref{thm:mmdapproxpowerlaw}}{Theorem 11}}
\label{app:proofs:mmdapproxpowerlaw}

\begin{proof}
    From \cref{thm:mmd_limit} and the continuous mapping theorem~\citep[Theorem 2.3]{vaart1998asymptotic}, letting $Q\coloneqq2\sum_i\lambda_i Z_i^2$ as in \eqref{app:eq:mmd_limit}, we have that 
    \begin{align}
        \sqrt n \MMD_n \xrightarrow[]{d} \sqrt Q.
    \end{align}
    As the cdf of $Q$ is continuous everywhere, we invoke \citet[Lemma 21.2]{vaart1998asymptotic} to get the convergence of the quantile functions.
    Let $q_\alpha(n) \coloneqq F_{\MMD_n}^{-1}(\alpha)$ and $q_\alpha \coloneqq F_{\sqrt Q}^{-1}(\alpha)$, which are well-defined if there exists $\lambda_i \neq 0$. 
    We then have
    \begin{align}
        &\sqrt n q_\alpha(n) \to q_\alpha &\text{and thus} &&\log q_\alpha(n) + \frac12 \log n \to \log q_\alpha.
    \end{align}
    As $\beta_\alpha \coloneqq \log q_\alpha$ does not depend on $n$, the proof is concluded.
\end{proof}

\subsection{Proof of \texorpdfstring{\cref{lemma:bootstrap_consistency}}{Lemma 13}}
\label{app:proofs:bootstrap_consistency}

\begin{proof}
    First, as customary, we rewrite $\MMD_n^2$ as a one-sample second-order V-statistic.
    From \eqref{eq:mmd_empirical} with $n=m$,
    \begin{align}
        V_n \coloneqq \MMD_n^2 = \frac{1}{n^2} \sum_{i, j=1}^n h\lrp{w_i, w_j},
    \end{align}
    where $w_i = (x_i, x'_i)$ and $h\lrp{(x_i, x'_i), (x_j, x'_j)} \coloneqq k(x_i, x_j) + k(x'_i, x'_j) - k(x_i, x'_j) - k(x'_i, x_j)$.
    From~\cref{thm:mmd_limit}, we know that $nV_n$ converges in distribution to $2 \sum_i \lambda_i Z_i^2$, where the $Z_i$-s are i.i.d. standard normals and the $\lambda_i$-s are the eigenvalues of the operator in \eqref{app:eq:MMDn_proof_A}.

    \emph{$h \in \tilde{L}_2$.} Recall that $\tilde{L}_2(\mathbb{Q}^m)$ is the set of $h \in L_2(\mathbb{Q}^m)$ such that $J_Q h \in L_2(\mathbb{Q}^q)$ for every partition $Q$ of $\{1, \dots, m\}$ into $q$ blocks, where $J_Q h$ identifies the arguments of $h$ within each block \citep[eqs.~(3.3)--(3.5)]{arcones1992bootstrap}. Here $m = 2$ and $\mathbb{Q} = P^2$ is the law of $W = (X, X')$, so that $h \in \tilde{L}_2(\mathbb{Q}^2)$ if and only if $\mathbb{E}\, h^2(W_1, W_2) < \infty$ and $\mathbb{E}\, h^2(W_1, W_1) < \infty$. Both hold, as $|h| \leq 4\bar{k}$ with $\bar{k} := \sup_{x, x'} |k(x, x')| < \infty$; in particular, $\|h\|^2_{\tilde{L}_2} \leq 32\bar{k}^2$.
    
    For a distribution $Q \in \m M_1^+(\m X)$ and $w_1, w_2 \in \m X^2$, define the projection 
    \begin{align}
        (\pi^{Q^2} h)(w_1, w_2) 
            = \lrp{\delta_{w_1} - Q^2} \otimes \lrp{\delta_{w_2} - Q^2} h 
            \coloneqq \int_{\m X^2} h(w, w') \ d\lrp{\delta_{w_1} - Q^2} \d\lrp{\delta_{w_2} - Q^2}.
    \end{align}
    Notice that $\pi^{Q^2}$ is $\pi_{r, m}^{Q^2}$ in \citet[eq. (2.3)]{arcones1992bootstrap}, with $r-1=1$ for the order of degeneracy and $m=2$ for the order of $V_n$.
    Letting $W \sim Q^2$ and $W', W_1, W_2$ be i.i.d. copies of $W$, $\pi^{Q^2} h$ takes the form
    \begin{align}
         (\pi^{Q^2} h)(w_1, w_2) = h(w_1, w_2) - \E_{Q^2} h(w_1, W) - \E_{Q^2} h(W, w_2) + \E_{Q^2 \otimes Q^2} h(W_1, W_2). 
    \end{align}
    Now, letting $\mu_{Q}$ be the mean embedding of $Q$ into $\m H_k$, $X\sim Q$, and $X'$ an i.i.d. copy of $X$, we have
    \begin{align}
        \E_{Q^2} h(w_1, W) 
            &= \E_Q k(x_1, X) + \E_Q k(x_1', X') - \E_Q k(x_1, X') - \E_Q k(x'_1, X) \\
            &= \mu_Q(x_1) + \mu_Q(x_1') - \mu_Q(x_1) - \mu_Q(x_1') = 0
    \end{align}
    Similarly, $\E_{Q^2} h(W, w_2) = 0$.
    For the final term, letting $W_i = (X_i, X_i')$ for $i=1, 2$ we have, since $X_1, X_1', X_2, X_2'$ are i.i.d., 
    \begin{align}
        \E_{Q^2 \otimes Q^2} h(W_1, W_2) 
            &= \E_{Q \otimes Q} k(X_1, X_2) + \E_{Q \otimes Q} k(X_1', X_2') - \E_{Q \otimes Q} k(X_1, X_2') - \E_{Q \otimes Q} k(X_1', X_2)
            = 0.
    \end{align}
    Hence, $\pi^{Q^2} h = h$ for any choice of $Q \in \m M_1^+(\m X)$ and, in particular, for $Q = P_N$.

    To conclude the proof, we invoke \citet[Theorem 3.5, laws (b) and (c)]{arcones1992bootstrap} and obtain that $n\MMD_n^2 = nV_n$ and $\tfrac N2 \widehat \MMD^2_{N/2; N}|\b Y$ converge in distribution to the same random variable for $P^N\aew \ \b Y$. 
\end{proof}

\end{document}

%% file: preamble.tex
\usepackage{amsfonts}
\usepackage{amsmath}
\usepackage{amssymb}
\usepackage{amsthm}

\usepackage{mathtools}
\mathtoolsset{showonlyrefs}

\usepackage{silence}  % ignore annoying warnings
\usepackage{algorithm}  % algorithms
\usepackage{algorithmicx}   % algorithms
\usepackage{algpseudocode}  % algorithms
\usepackage{array}      % ?
\usepackage{bm}         % \bm{1} for indicator function
\usepackage{booktabs}   % tables
\usepackage{cancel}     % ?
\usepackage{caption}    % subfigures
\usepackage{centernot}  % negation of symbols
\usepackage{comment}    % comment env
\usepackage{enumerate}  % roman indices
\usepackage{enumitem}   % itemize options
\usepackage{graphicx}   % graphics options
\usepackage{hyperref}
\usepackage[capitalise,nameinlink]{cleveref}    % \cref, to be loaded after hyperref
\usepackage{lmodern}
\usepackage{mathtools}  % \MoveEqLeft, \coloneqq
\usepackage{multirow}   % tablespzcharm 
\usepackage{natbib}
\usepackage{nicefrac}   % better symbols for 1/2 etc
\usepackage{pifont}     % ?
\usepackage{rotating}   % sidewaysfigure
\usepackage{setspace}   % line spacing macros
\usepackage{stfloats}   %
\usepackage{subcaption} % subfigure env
\usepackage{tabularx}   % ?
\usepackage{textcomp}   %
\usepackage{thmtools}   % continued and restatable theorems
\usepackage{tikz-cd}    % commutative diagrams
\usepackage{url}        % url
\usepackage{verbatim}   % ?
\usepackage[table,xcdraw,dvipsnames]{xcolor}    % colors  

\usepackage{footnote}
\makesavenoteenv{tabular}
\newcommand{\lrp}[1]{\left( #1 \right)}                     % ()
\newcommand{\lrb}[1]{\left[ #1 \right]}                     % []
\newcommand{\lrc}[1]{\left\{ #1 \right\}}                   % {}
\newcommand{\lra}[1]{\left| #1 \right|}                     % ||
\newcommand{\lrn}[1]{\left\lVert #1 \right\rVert}           % || ||
\newcommand{\lrang}[1]{\left\langle #1 \right\rangle}                  % inner product

\newcommand{\R}{\mathbb{R}}                                 % reals
\newcommand{\N}{\mathbb{N}}                               % naturals
\newcommand{\E}{\mathbb{E}}                      % expectation 
\renewcommand{\P}{\mathbb{P}}                               % prob. distr.
\newcommand{\m}[1]{{\mathcal{#1}}}            % mathcal 
\renewcommand{\d}{\mathrm d}

\NewCommandCopy{\textbaraccent}{\b}   % keep LaTeX's original \b (bar-under accent) for text mode
\ExplSyntaxOn
\RenewDocumentCommand{\b}{m}
  {
    \mode_if_math:TF
      {
        \tl_if_head_is_N_type:nTF {#1}
          { \exp_args:Ne \__mybold:Nn { \tl_head:n {#1} } {#1} }
          { \bm{#1} }
      }
      { \textbaraccent{#1} }
  }
\cs_new_protected:Npn \__mybold:Nn #1#2
  { \token_if_letter:NTF #1 { \mathbf{#2} } { \bm{#2} } }
\ExplSyntaxOff

\newcommand{\sign}{\operatorname{sign}}
\newcommand{\iid}{\overset{\mathrm{iid}}{\sim}}     % iid sample
\newcommand{\ind}{\bm{1}}               % indicator function

\newcommand{\aew}{\textnormal{-a.e.}}      % almost everywhere

\newcommand{\alt}{a}                            % generic alternative
\newcommand{\nalt}{{n_{a}}}                     % number of alternatives
\newcommand{\Ra}{\mathcal{R}_\nalt}             % set of all rankings (over \Alt)
\newcommand{\ra}{r}                              % ranking

\NewDocumentCommand{\kernel}{oo}{k\IfValueT{#1}{_{\textnormal{#1}}}\IfValueT{#2}{^{#2}}}

\newcommand{\ksup}{k_{\operatorname{sup}}}
\newcommand{\MMD}{\operatorname{MMD}}                   % MMD
\newcommand{\eps}{\varepsilon}

\newcommand{\transpose}{\top}

\newcommand{\fixed}{\textnormal{f}}
\newcommand{\random}{\textnormal{r}}

\newcommand{\ev}{\operatorname{E}}

\newtheorem{theorem}{Theorem}          % theorem
\newtheorem{lemma}[theorem]{Lemma}          % lemma

\newtheorem{definition}[theorem]{Definition} 
\newtheorem{assumption}[theorem]{Assumption}

\theoremstyle{remark}                           % remark
\newtheorem{remark}[theorem]{Remark}
\newtheorem{example}{Example}

\renewcommand\thmcontinues[1]{Continued}        % continued

\theoremstyle{plain}
\newtheorem{theoremA}{Theorem}[section]
\newtheorem{lemmaA}[theoremA]{Lemma}

\setenumerate{labelindent=0em,leftmargin=2em,topsep=0cm,partopsep=2mm,parsep=0cm,itemsep=1mm}
\setitemize{labelindent=0em,leftmargin=1.3em,topsep=0cm,partopsep=0cm,parsep=0cm,itemsep=1mm}